\documentclass{LMCS}

\def\doi{8 (1:15) 2012}
\lmcsheading%
{\doi}
{1--27}
{}
{}
{Feb.~\phantom02, 2011}
{Mar.~\phantom02, 2012}
{}

\usepackage{hyperref}
\usepackage{amssymb,amsmath,xspace,enumerate,gensymb}

\usepackage{lineno}
\usepackage{graphicx,color, pstricks, pst-node, pst-plot}
\usepackage{xspace}
\usepackage{arydshln}
\usepackage{rotating}
\usepackage{listings}
\usepackage{soul}

\usepackage[comms,scclasses,proofs,amslib]{my}

 \usepackage[ruled]{algorithm2e}

\newcommand  {\problemdescr} [3] {
    \vspace{3mm}
    \def\Name{#1}
    \def\Input{#2}
    \def\Question{#3}
	  \hspace{5mm}\begin{tabular}{r p{11cm}r}%
	    \textit{Problem:} & \textsc{\Name} \\
	    \textit{Input:} & \Input \\
	    \textit{Question:} & \Question
	  \end{tabular}
    \vspace{3mm}
    }

\newcommand{\myd}{\ensuremath{\circ_d}}
\newcommand{\myc}{\ensuremath{\circ_c}}

\newcommand{\warrow}{\ensuremath{\leftarrow}}
\newcommand{\earrow}{\ensuremath{\rightarrow}}
\newcommand{\narrow}{\ensuremath{\uparrow}}
\newcommand{\sarrow}{\ensuremath{\downarrow}}

\newcommand{\Pro}{\ensuremath{\text{Pro}}\xspace}

\newcommand     {\DW}   {\myclass{FO}}

\newcommand     {\tpo}   {total preorder\xspace}
\newcommand     {\lo}   {linear order\xspace}
\newcommand     {\tpos}   {total preorders\xspace}
\newcommand     {\los}   {linear orders\xspace}

\newcommand     {\pup}   {\ensuremath{\uparrow}} 
\newcommand     {\pdown}   {\ensuremath{\downarrow}} 
\newcommand     {\pe}   {\ensuremath{\cdot}} 

\newcommand{\dw}{\ensuremath{\leftarrow}}
\newcommand{\de}{\ensuremath{\rightarrow}}
\newcommand{\dn}{\ensuremath{\uparrow}}
\newcommand{\ds}{\ensuremath{\downarrow}}
\newcommand{\dsw}{\ensuremath{\swarrow}}
\newcommand{\dse}{\ensuremath{\searrow}}
\newcommand{\dne}{\ensuremath{\nearrow}}
\newcommand{\dnw}{\ensuremath{\nwarrow}}




\newcommand{\precsucc}{S}

\renewcommand{\S}{\precsucc}
\newcommand{\N}{\precsucc}

\newcommand{\nS}{\neg \S}

\newcommand{\Ix}{\ensuremath{I_x}}
\newcommand{\Iy}{\ensuremath{I_y}}

\makeatletter 
\@ifundefined{proposition}{
 \newtheorem{proposition}[thm]{Proposition}
\newtheorem{theorem}[thm]{Theorem}

\newtheorem{corollary}[thm]{Corollary}

\newtheorem{myclaim}{Claim}
\newtheorem{example}[thm]{Example}

\newenvironment{proofof}[1]{\noindent{\bf Proof (of #1).}\enspace}{\qed}
}{}
\makeatother

\newcommand{\LPP}{\ensuremath{\text{2LPP}}}



\renewcommand {\calC}      {{\mathcal C}\xspace}
\renewcommand {\calD}      {{\mathcal D}\xspace}

\renewcommand {\calO}      {{\mathcal O}\xspace}
\renewcommand {\calP}      {{\mathcal P}\xspace}

\renewcommand {\calO}      {{\mathcal O}\xspace}

\renewcommand {\calS}      {{\mathcal S}\xspace}

\newcommand{\Prop}{\ensuremath{\text{\rm Prop}}}
\newcommand{\Seq}{\ensuremath{\text{\rm PSeq}}}

\newcommand{\PCP}{PCP\xspace}

\newcommand{\phit}[1]{\ensuremath{\varphi_{\text{#1}}}}

\setul{}{0.2mm}
\setstcolor{red}


 

%
%


 

\rnc{\todo}[1]{\ \\ {\color{red} \fbox{\parbox{\linewidth}{{\sc ToDo}:\\  #1}}}}

\setlength{\marginparwidth}{2.5cm}
\setlength{\marginparsep}{3pt}

\newcounter{CommentCounter}

\newcommand{\FinOrd}{\ensuremath{\text{FinOrd}}}



\begin{document}

%
\title{Two-Variable Logic with Two Order Relations }

\author[T.~Schwentick]{Thomas Schwentick\rsuper a}	
\address{{\lsuper{a,b}}TU Dortmund University}	
\email{\{thomas.schwentick, thomas.zeume\}@cs.uni-dortmund.de}  

\author[T.~Zeume]{Thomas Zeume\rsuper b}
\address{\vskip-6 pt}	

\thanks{{\lsuper{a,b}}We acknowledge the financial support of the Future and Emerging Technologies (FET) programme
  within the Seventh Framework Programme for Research
  of the European Commission, under the FET-Open
  grant agreement FOX, number FP7-ICT-233599. We further acknowledge
  the financial support by the German DFG under grant SCHW 678/4-1.}	
%
%
%


\keywords{two-variable logic, linear orders, data word}
\subjclass{F.4.1}

\begin{abstract}
    It is shown that the finite satisfiability problem for
    two-variable logic  over structures with
    one \tpo relation, its induced successor relation, one \lo
    relation and some further unary relations is  \EXPSPACE-complete. Actually, \EXPSPACE-completeness already holds for structures that do not include the induced successor relation. As a
   special case, the \EXPSPACE upper bound applies to two-variable logic over
   structures with two    linear orders. A further consequence is  that satisfiability of
    two-variable logic over data words with a linear order on
    positions and a linear order and successor relation on the data is
    decidable in \EXPSPACE.

    As a complementing result, it is shown that over structures
    with two \tpo relations as well as over structures with one \tpo
    and two \lo relations, the finite  satisfiability problem for
    two-variable logic is undecidable. 
\end{abstract}

\maketitle

\section{Introduction}  

First-order logic restricted to two-variables ({\em two-variable logic} or
$\FO^2$ in the following) is generally known to be reasonably expressive for
many purposes. In contrast to full first-order logic, its satisfiability and its finite
satisfiability problem are decidable \cite{MortimerOn75}, in fact they
are \NEXPTIME-complete \cite{GradelKV97}. 

However, if one is interested in (finite or general)  satisfiability
of $\FO^2$ over structures with a particular property $P$ these general
results can only be applied if $P$ is expressible in
$\FO^2$. Unfortunately, there are some simple properties like transitivity of a binary relation that
cannot be expressed in $\FO^2$. In particular, in $\FO^2$ it can
neither be expressed that a given binary relation is a linear order
nor that it is an equivalence relation. Thus, the results from
\cite{MortimerOn75, GradelKV97} do not help for satisfiability
of $\FO^2$ over (finite or general) structures with linear orders or equivalence relations, 

In \cite{OttoTwo01} it was shown that it can be decided in \NEXPTIME whether a given
$\FO^2$ sentence has a model (or whether it has a finite model) in
which a particular relation symbol is interpreted by a linear
order. On the other hand, in the presence of eight binary symbols that
have to be interpreted as linear orders it is undecidable. Note that
in these results the formulas might use further relation symbols of
arbitrary arity for which the possible interpretations are
unrestricted. 

The problem of deciding finite satisfiability\footnote{As this article only
  deals with finite structures we henceforth only mention results on finite satisfiability. } 
was shown to be \NEXPTIME-complete over structures with one equivalence
relation and undecidable over structures with three equivalence
relations in \cite{KieronskiOttoSmall05}.  In \cite{KieronskiTenderaOn09} it was
shown that over structures with two equivalence relations the problem is
decidable in triply exponential nondeterministic time.

In this article we study two-variable logic over structures with linear
orders and total preorders. A total preorder $\precsim$ is basically an
equivalence relation $\sim$ whose equivalence classes are ordered by
$\prec$. Total preorders can therefore encode equivalence relations as
well as linear orders and in this sense they generalize both types of
relations. It should be stressed that in our results, structures may have
an arbitrary number of additional unary relations but {\em
  no} further non-unary relations.

Our motivation stems from the context of so-called data words. A
{\em data word} is a word, that is, a finite sequence of symbols
from a finite alphabet, but besides a symbol, every position also
carries a value from a possibly infinite domain. An interest in data
words and data trees arises from applications in database theory,
where XML documents can be modeled by data trees in which the symbols
correspond to the tags and the data values to text or attribute
values. On the other hand, (infinite) data words can also be
considered as traces of  computations in a distributed environment,
where symbols correspond to states of processes and data values
encode process numbers. Recently many logics and automata models have
been considered for data words and data trees (see
\cite{SegoufinAutomata06} for a gentle introduction). 
 
First-order logic on data words (with a
linear order on positions and equality on data values)  is undecidable, even for
formulas with three variables
\cite{BojanczykMuscholl+Two-Variable06}. 

On the other hand, finite satisfiability of two-variable logic on data words is
decidable. More precisely, it was shown in
\cite{BojanczykMuscholl+Two-Variable06} that it is decidable even in the
setting where data
words are equipped with a linear order and its corresponding successor
relation on positions and equality on data values. However, the
complexity is unknown but basically equivalent to the open complexity of Petri
  net reachability. It was further shown in
  \cite{BojanczykMuscholl+Two-Variable06} that the problem is
  $\NEXPTIME$-complete without the successor relation on the positions
  and that it becomes undecidable if the data values are equipped with
  a linear order (in the presence of linear order and successor
  relation on positions).

Data words are closely related with finite structures with order relations.
More precisely, data words can be represented by
\begin{iteMize}{$\bullet$}
\item some unary relations to encode symbols at positions,
\item a linear order (on the positions), possibly its induced successor
relation, and
\item an equivalence relation corresponding to data equality between positions.
\end{iteMize}
In this context, an additional linear order
on data values can be represented by a total preorder $\precsim$. All
positions with a particular value constitute an equivalence class and
these classes are ordered by the linear order on the values.  
It is exactly this setting which triggered our study of structures with a
linear order, a total preorder, and
some unary relations.

\subsubsection*{Results.} We show that finite satisfiability of
two-variable logic over structures with a linear order, a total
preorder,  and unary relations is
$\EXPSPACE$-complete. Consequently, finite 
satisfiability of $\FO^2$ over data words with a linear order (but no
successor relation)  on the
positions and a linear order on the data
values can be decided in exponential space as well.
As it can be expressed in $\FO^2$ that a total preorder is a linear order,
the corresponding problem with two linear orders (and no total
preorder) is also solvable in \EXPSPACE.  Thus, the gap between one
and eight linear orders that was left in the work of
Otto \cite{OttoTwo01} is narrowed.

The upper bound even holds when the preorder is accompanied by a
partial successor relation (the precise definition of this
  notion will be given in Section \ref{sec:preliminaries}
  ). As a corollary, satisfiability of $\FO^2$ on data words is in
  $\EXPSPACE$ even if the linear order on the data values is accompanied with a corresponding partial successor relation.


The upper bounds are obtained by a reduction to finite satisfiability of {\em semi-positive}
$\FO^2$ sentences over sets of labeled points in the plane, where points can be compared by
their relative position with respect to the directions
$\nwarrow,\nearrow,\swarrow,\searrow,\warrow,\earrow$. Furthermore two
points can be tested for being on successive horizontal lines.
Semi-positive formulas are in negation normal form and do not
contain negated ``direction atoms''. For a precise definition we
refer to Section \ref{sec:main}. Finite satisfiability of {\em semi-positive}
$\FO^2$ over such point sets can in turn be reduced in exponential
time to a constraint problem
for labeled points in the plane with \PSPACE-complexity. 
The $\EXPSPACE$ lower bound is by a reduction
from exponential width corridor tiling. 

Finally, we show by reductions from the Post Correspondence Problem (\PCP) problem that finite satisfiability of $\FO^2$ over
structures\footnote{Additional unary relations are again allowed.} with two total preorders and over
structures with two linear orders and a total preorder is
undecidable. 
%
\subsubsection*{Organization.} After some basic definitions in Section
\ref{sec:preliminaries}, we prove the \EXPSPACE upper bound in Section
\ref{sec:main} and 
all lower bounds in Section \ref{sec:lower}.  We conclude with Section
\ref{sec:conclusion} where we discuss research directions and related
work on compass and interval logics.

\subsubsection*{Related work.}

As mentioned before, also other logics for data words besides $\FO^2$
have been studied. As an example we mention the ``freeze''-extension
of $\text{LTL}$ studied, e.g. in
\cite{DemriLazicLTL09} and 
\cite{FigueiraSegoufinFuture-Looking09}. The latter paper is more
closely related to our work as it considers a restriction of $\text{LTL}$ without
the $X$-operator.  Amaldev Manuel has recently proved
decidability and undecidability results for $\FO^2$-logic over
structures with orders \cite{Manuel10}. However, in his work
structures have at least two successor relations but no linear orders, hence neither
results nor techniques translate from his work to ours nor vice
versa. 

\subsubsection*{Acknowledgements.} We thank Jan van den Bussche for
stimulating discussions and Daniela Huvermann for careful proof reading.
\section{Preliminaries}\label{sec:preliminaries}

In this article, we only consider \textit{finite} structures.
We are interested in three kinds of finite structures:
ordered structures, sets of labeled points in the plane and data
words.

In the following, $\nat$ denotes the set of natural numbers and $\rat$
the set of rationals. 

\subsubsection*{Ordered Structures.} We first fix our notation concerning
order relations. A {\em \tpo} $\precsim$ is a transitive, total
relation, that is, $u \precsim v$ and $v \precsim w$ implies $u
\precsim w$ and for every two elements $u,v$ of a structure $u
\precsim v $ or  $v \precsim u$ holds. In particular, every total preorder is reflexive, that is $u \precsim u$ holds for every $u$. A {\em \lo} $\leq$ is a
antisymmetric \tpo, that is, if $u \leq v$ and $v \leq u$ then $u = v$.
Thus, the essential difference between a \tpo and a \lo is that the
former allows that for two distinct elements $u$ and $v$ both  $u
\precsim v$ and $v \precsim u$ hold.  We call two such elements {\em
  equivalent with respect to $\precsim$} and write $u \sim
v$. We write $u\prec v$ if $u\precsim v$ and $u\not\sim v$.

A \tpo can be viewed as an equivalence relation $\sim$
whose equivalence classes are linearly ordered by $\prec$. Clearly, every \lo is
a \tpo with equivalence classes of size one. 

We define the  \textit{induced successor relation} $\S_\precsim$ of a total preorder $\precsim$ 
 as follows. For
  two elements $u,v$,  $\S_\precsim(u,v)$ if $u\prec v$,
  and there is no element $w$ such that $u\prec w \prec v$. 
A \emph{partial successor relation} $S$ is a sub-relation of $\S_\precsim$ such that $u\sim u'$, $v\sim v'$ and $S(u,v)$ imply $S(u',v')$. Thus, a partial
  successor relation is a sub-relation of $\S_\precsim$ that is derived
from a sub-relation of the successor relation on the equivalence
classes of $\precsim$. We say that $\S_\precsim$ is \textit{complete} to distinguish it from
(truly) partial successor relations.

We use binary relation symbols $<, <_1, <_2, \ldots$ that are always
interpreted as \los, binary relation symbols
$\precsim,\precsim_1,\precsim_2,\ldots$ that are interpreted as \tpos, and
binary relation symbols $S, \S_1, \S_2, \ldots$
that are interpreted as (partial or complete) successor relations. 
 We note that $\sim$ and $\prec$ can be expressed
in two-variable logic, given $\preceq$. 

In this article, an \emph{ordered structure} is a finite structure with
non-empty universe and 
some \los, some \tpos, some successor relations and some unary relations. We
always allow an unlimited number of unary relations and specify the
numbers of allowed \los and \tpos explicitly. 

We denote classes of ordered structures by the notation
  $\FinOrd(\calO)$ where $\calO$ indicates the orders and successor
  relations following the above conventions. Here, corresponding
  relations are grouped together by square brackets.
For example, by
$\FinOrd(\leq_1,[\precsim_2, S_2])$ we denote
the set of finite structures with one \lo and one
\tpo together with a corresponding partial successor relation.

\subsubsection*{Sets of Labeled Points.} As mentioned before, we also consider finite sets of labeled points. Let $\Prop=\{e_1,\ldots,e_k\}$ be a set of propositions.  A
{\em $\Prop$-labeled point} $p$ is a point in $\nat^2$ in which 
propositions $e_1,\ldots,e_k$ may or may not 
hold. We refer to the $x$-coordinate and the $y$-coordinate of a point
$p$ by $p.x$ and $p.y$, respectively.
We simply say {\em point} if $\Prop$ is understood from the
context. We do not allow different points $p\not=q$ at the same
position, that is, if $p.x=q.x$ and $p.y=q.y$ then $p = q$.
We say that a set $\Prop$ of labeled points is
\textit{contiguous}, if the $y$-coordinates of $\Prop$ constitute an
interval\footnote{It is not required that every number
  occurs only once as the $y$-coordinate of a point.} in $\nat$. 

For a finite set $P$ of labeled points, we write $p \dnw q$ if
$p.x>q.x$ and $p.y < q.y$, that is if $q$ is in the northwest of $p$.
Likewise for $\dw, \dsw, \ds, \dse, \de, \dne, \dn$.  We write
$\Iy(p,q)$ if $p.y + 1 = q.y$ and $\Ix(p,q)$ if $p.x + 1 = q.x$. Let $\calD = \{\dne, \dn, \dnw, \dw, \dsw, \ds, \dse,
\de\}$ denote the set of {\em directions}. We denote $\calD$ without
$\ds$ and $\dn$ by $\calD_-$. 


\subsubsection*{Point Sets versus Ordered Structures.} There is a strong
connection between sets of labeled points and ordered structures with
two total preorders.

With every finite ordered structure $\calO$ with universe $U$ and
relations $\precsim_1$, $\precsim_2$, their induced successor
relations $\S_1,\S_2$ and  some unary relations one can
associate a finite point set $M$ in the plane and a bijection
$\pi$ such that the following statements hold.
\begin{iteMize}{$\bullet$}
\item $u\prec_1 v$ in $\calO$ if and only if $\pi(u)\dne \pi(v)$ or $\pi(u)\de
\pi(v)$ or $\pi(u)\dse \pi(v)$ in $M$.
\item $u\prec_2 v$ in $\calO$ if and only if $\pi(u)\dnw \pi(v)$ or
  $\pi(u)\dn \pi(v)$ or $\pi(u)\dne
\pi(v)$ in $M$.
\item $u\sim_1 v$ in $\calO$  if and only if $\pi(u)\dn \pi(v)$ or $\pi(u)\ds
\pi(v)$ or $\pi(u) = \pi(v)$ in $M$.
\item $u\sim_2 v$ in $\calO$  if and only if $\pi(u)\dw \pi(v)$ or $\pi(u)\de
\pi(v)$ or $\pi(u) = \pi(v)$ in $M$.
\item $\S_1(u,v)$ in $\calO$ if and only if $\Ix(\pi(u), \pi(v))$ in $M$.
\item $\S_2(u,v)$ in $\calO$ if and only if $\Iy(\pi(u), \pi(v))$ in $M$.
\end{iteMize}

To this end, we can assign to every equivalence class of $\precsim_1$
a natural number in increasing order, likewise for $\precsim_2$. Then,
$\pi(u).x$ is just the number of $u$'s
$\sim_1$-class and $\pi(u).y$ is the number of its
$\sim_2$-class. Note that this construction might yield a multiset as
there could be elements $u\not=v$ such that $u\sim_1 v$ and $u\sim_2
v$. However, in the following this case will not occur as we will
require that $\precsim_1$ is a linear order.

Similarly, from every labeled point set $M$ in $\nat^2$, an
ordered structure $\calO$ (with universe $M$) can be obtained by defining
\begin{iteMize}{$\bullet$}
\item $u\precsim_1 v$ if $u.x\le v.x$ and
\item $u\precsim_2 v$ if $u.y\le v.y$
\end{iteMize}
and by defining $S_1$ and $S_2$ accordingly.

Thus, points are equivalent with respect to
  $\precsim_2$ if they have the same $y$-coordinate.  

 

\subsubsection*{Data Words.} We are interested in a particular kind of
ordered structures, \textit{data words with ordered data}. In a
nutshell, a data word with ordered data is a
string in which every position carries a label from a finite alphabet {\em
  and} a value from a potentially infinite, ordered domain. In this paper, this
domain will always be $\nat$. As only a finite number of values can occur in a finite
  data strings this is not a restriction. More formally, a \textit{data word} $s$
over alphabet $\Sigma$ is a  finite sequence $(\sigma_1,d_1),\ldots,(\sigma_n,d_n)$ 
where  $\sigma_i\in\Sigma$ and  $d_i\in\nat$, for every $i$.  Such a data word can be
represented in a natural way by an ordered structure whose universe is
the set $\{1,\ldots,n\}$ of positions of $s$ and which is equipped
with a linear order $\le_1$ and a successor relation $S_1$ on the positions. To represent
the linear order on the data values the structure may have a total
preorder $\precsim_2$. Thus, $i\precsim_2 j$ if $d_i\le d_j$. Note
that if the same data value occurs at different positions, 
$\precsim_2$ may indeed have non-singleton equivalence
classes. Furthermore, the structure has one unary relation for each
symbol of $\Sigma$. In logical formulas we simply write $\sigma(x)$ to
denote that position $x$ carries symbol $\sigma$. Note that data words
are a special kind of ordered structures as every position is
contained in exactly one of the unary relations. 

In this article, we do not consider data words with the
successor relation on positions. However, we allow a partial
successor relation $S_2$ of the total preorder which translates to the
successor relation on the data values as $(i,j)\in S_2$ if and only if
$d_i+1=d_j$.

From the point of view of finite structures there is only one
difference between ordered structures and data words: in data words
every position carries exactly one symbol from some finite alphabet
$\Sigma$, whereas ordered structures in general allow several unary
relations that need not be disjoint. 
  We thus add $\Sigma$ to our notation of ordered structures to
  indicate that there is one unary relation per symbol in $\Sigma$ and
  the unary relations partition the universe. For example, write 
$\FinOrd(\Sigma,\leq_1, [\precsim_2,S_2])$  to denote predicate logic over
data words with ordered data, alphabet $\Sigma$ and domain $\nat$
without successor on positions but with successor on values.

The correspondence between ordered structures and sets of labeled
points naturally translates to a correspondence between data words
with ordered data and sets of labeled points. Figure
\ref{fig:dwlabpoint} shows the point set corresponding to the data
word $w = \biggr (\begin{array}{lllllll}
      a & a & a & b & b\\
      1 & 3 & 4 & 6 & 3 \\
\end{array}\biggl )$.

  \begin{figure}[t]
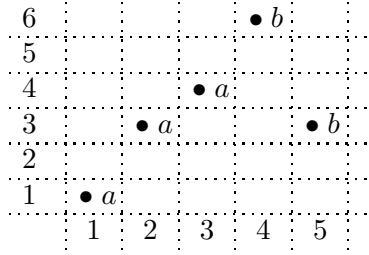
 \centering
  \setlength{\dashlinegap}{3pt}
  \setlength{\dashlinedash}{0.5pt}
  \newdimen\mywidth
  \mywidth=0.4cm
      \begin{tabular}{p{\mywidth}: p{\mywidth}: p{\mywidth}: p{\mywidth}: p{\mywidth}: p{\mywidth} : p{\mywidth}}
	 6 & &  & &  \centering{$\bullet \; b$}& &\\
	\hdashline
	 5 & &  & & & & \\
	\hdashline
 	 4 & & & \centering{$\bullet \; a$}& & &\\
	\hdashline
	 3 & & \centering{$\bullet \; a$}& & & \centering{$\bullet \; b$} &\\
	\hdashline
	 2 &  & & &   & &\\
	\hdashline
	 1 & \centering{$\bullet \; a$}& &  & & &\\
\hdashline
& \centering{1} & \centering{2} & \centering{3} & \centering{4} & \centering{5}\\ 
      \end{tabular}
    \caption{Finite set of labeled points corresponding to the data word $(a,1)(a,3)(a,4)(b,6)(b,3)$. Columns are ordered by $<_1$ and rows are ordered by $\precsim_2$.
\label{fig:dwlabpoint}}
  \end{figure}



\subsubsection*{Logic.} 
\textit{Two-variable logic} is the
restriction of predicate logic to formulas that only use (at most)
two variables $x$ and $y$. We denote two-variable logic by $\FO^2$. 

We denote two-variable logic on a restricted set $\calS$ of structures
by $\FO^2(\calS)$. As an abbreviation, we denote
$\FO^2(\FinOrd(\calO))$, for a set $\calO$ of predicates by $\FO^2(\calO)$.

If $\calD$ is a set of directions, we denote by $\FO^2(\calD)$
two-variable logic with binary relations from
$\calD$ and unary relations from $\Prop$.
We interpret $\FO^2(\calD)$ logic over non-empty sets of
$\Prop$-points in $\nat^2$ where $\Prop$ has a proposition $e_i$, for
every unary relation symbol $U_i$.

\section{Upper Bounds for Two-variable Logic on Ordered Structures}\label{sec:main}

The following theorem states our main result.

\begin{theorem} \label{theo:twodcompl}
  Finite satisfiability of $\FO^2(\leq_1, [\precsim_2, S_2])$ is \EXPSPACE-complete.
\end{theorem}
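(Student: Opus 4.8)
I would prove the two bounds separately, starting with membership in \EXPSPACE. Using the correspondence of Section~\ref{sec:preliminaries}, I first associate with every structure in $\FinOrd(\leq_1,[\precsim_2,S_2])$ a labelled point set: the $x$-coordinate of an element is its rank in $\leq_1$ (all distinct, since $\leq_1$ is a \lo) and its $y$-coordinate is the rank of its $\precsim_2$-class. Under the bijection $\pi$ the atom $x<_1 y$ translates to ``$\pi(y)$ lies strictly to the right of $\pi(x)$'', namely $\pi(x)\dne\pi(y)\lor\pi(x)\de\pi(y)\lor\pi(x)\dse\pi(y)$; the atom $x\precsim_2 y$ to ``$\pi(y)$ is weakly higher''; and $S_2(x,y)$ to $\Iy(\pi(x),\pi(y))$ together with a unary marker recording which consecutive-class pairs are active in the partial successor. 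Because the eight directions together with equality partition the relative positions of two distinct points, the negation of any order- or direction-atom equals a positive disjunction of the remaining ones; pushing negations inward therefore yields an equivalent \emph{semi-positive} sentence over point sets whose only negated atoms are unary.

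\textbf{From points to a \PSPACE constraint problem.} Putting the semi-positive sentence into the normal form $\forall x\forall y\,\chi \wedge \bigwedge_i \forall x\exists y\,\psi_i$, the positivity of the direction atoms means that $\chi$ only forbids certain label-pairs from occupying certain directions, while each $\psi_i$ only demands a witness of a prescribed label in a prescribed direction. I would encode this as a combinatorial \emph{constraint problem} whose objects are the (exponentially many) $1$-types occurring in each $\precsim_2$-class together with the way types are spread across rows, and whose solutions describe, row by row and respecting $\Iy$ between successive rows, a placement of types consistent with all forbidden-direction and witness requirements. I would then show this constraint problem is decidable in \PSPACE by a reachability argument over row-profiles: guess a bounded summary of each row and check that consecutive summaries are compatible. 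Since the type information makes the translated instance exponential in the length of the original sentence, a \PSPACE decision procedure on it runs in \EXPSPACE in the original input.

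\textbf{Lower bound.} For \EXPSPACE-hardness I reduce from exponential-width corridor tiling. Given tiles and width $2^n$, I encode a tiling with one point per cell: the $\precsim_2$-class ($y$-coordinate) is the row, so vertical adjacency is exactly $S_2$ (the relation $\Iy$ of consecutive rows), the column index $0,\dots,2^n-1$ is stored in binary by $n$ unary predicates $B_0,\dots,B_{n-1}$, and one further unary predicate per tile records the tile at a cell. In $\FO^2$ I can then enforce that each row is complete with an injective counter (for every cell of column $<2^n-1$ there is a same-class cell whose counter is one larger), that horizontal matching holds between a cell and its counter-successor in the same $\precsim_2$-class, that vertical matching holds between $\precsim_2$-successive cells with equal counter, and that the bottom and top rows satisfy the initial and final constraints; the order $\leq_1$ orders the cells of each row by column so the counter increments coherently. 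The reduction is polynomial and its correctness is routine.

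\textbf{Main obstacle.} The delicate part is the \PSPACE constraint problem: one must pin down a row-summary that is simultaneously bounded enough to be searched in polynomial space (in the exponential instance) and expressive enough to certify both that every existential witness can be met in its prescribed direction and that successive rows can be stitched together along $\Iy$ without unbounded memory. The genuinely \emph{partial} nature of $S_2$ -- only some consecutive-class pairs being $\Iy$-linked -- must be threaded through this machinery, and the exponential-versus-\PSPACE bookkeeping must be verified to land exactly at \EXPSPACE rather than one level higher.
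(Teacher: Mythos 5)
Your upper bound follows the paper's route essentially step for step: the translation into a semi-positive point logic over $\calD_-$ and $I_y$ (exploiting that $\leq_1$-ranks are pairwise distinct, so the vertical directions never arise) is the paper's Proposition~\ref{prop:topoints}; the compilation, via Scott normal form, of types into an exponential-size constraint problem on labeled points is Proposition~\ref{prop:tolpp}; and the \PSPACE plane sweep over row profiles is Propositions~\ref{prop:modelcharacterisation} and~\ref{prop:lpp}, giving \EXPSPACE overall exactly as you say. The one cosmetic difference is your unary marker for the partiality of $S_2$; the paper instead allows gaps in the $y$-coordinates, so that $I_y$ itself plays the role of the partial successor.

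Your lower bound takes a genuinely different route, and as written it has a gap. Reducing via $S_2$ is legitimate for the theorem as stated, but the list of constraints you enforce (row completeness, horizontal matching, universal vertical matching over $S_2$-pairs, bottom/top conditions) can be satisfied vacuously: $S_2$ is only a \emph{partial} successor relation, so a model may interpret it as the empty relation, in which case a structure consisting of one valid bottom row and one disjoint valid top row satisfies your sentence even when the tiling instance has no solution. You need an additional existential ``progress'' conjunct --- every cell not in the top row has an $S_2$-successor cell (with equal counter, say) --- to force a chain of linked, pairwise compatible rows from bottom to top; with that patch, your reduction works and is considerably simpler than the paper's. The comparison is instructive: the paper's Theorem~\ref{theo:expspacetwoorders} proves hardness already for $\FO^2(\leq_1, \precsim_2)$, with no successor relation in the vocabulary, and it is precisely the $\FO^2$-inexpressibility of ``next row'' that forces its two-points-per-cell encoding (the $R_-$ and $R_+$ copies, with the next row's $R_-$ part placed to the northwest of the current row's $R_+$ part); the formula \phit{next-line} there plays exactly the role of the progress condition you are missing. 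Your route, once patched, proves only the weaker statement (hardness when $S_2$ is available), whereas the paper's extra work yields the sharper successor-free hardness that it advertises and that justifies the remark in the abstract.
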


\begin{corollary} \label{cor:dw}
Finite satisfiability of $\FO^2$ over data words with a linear order
  on the positions and a linear order and a corresponding successor
  relation on the data values can be decided in \EXPSPACE.

\end{corollary}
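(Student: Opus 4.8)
The plan is to derive the corollary from Theorem~\ref{theo:twodcompl} by a straightforward, size-preserving reduction, exploiting the fact that the representations of data words with ordered data form exactly the subclass of $\FinOrd(\leq_1,[\precsim_2,S_2])$ in which the unary predicates partition the universe. Recall from the preliminaries that such a data word is encoded by taking $\leq_1$ as the order on positions, $\precsim_2$ as the total preorder with $i\precsim_2 j$ iff $d_i\le d_j$, and $S_2$ as the partial successor with $(i,j)\in S_2$ iff $d_i+1=d_j$. The one genuinely substantive observation is that this encoding is a bijection onto the partition-labeled structures of $\FinOrd(\leq_1,[\precsim_2,S_2])$: the data-value successor is in general only a \emph{partial} successor of $\precsim_2$, because the values may contain gaps, and conversely every partial successor is realizable by a suitable value assignment. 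This is precisely why Theorem~\ref{theo:twodcompl} is stated for partial (rather than complete) successors.

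First I would make the partition condition explicit by the $\FO^2$ sentence
\[
  \psi_\Sigma \;:=\; \forall x\Bigl(\bigvee_{\sigma\in\Sigma}\sigma(x)\ \wedge\ \bigwedge_{\sigma\ne\tau}\neg\bigl(\sigma(x)\wedge\tau(x)\bigr)\Bigr),
\]
which asserts that every element carries exactly one symbol of $\Sigma$ and whose length is polynomial in $|\Sigma|$. Given a sentence $\varphi$ over data words with alphabet $\Sigma$ and orders $\leq_1$, $[\precsim_2,S_2]$, I claim that $\varphi$ is finitely satisfiable over data words if and only if $\varphi\wedge\psi_\Sigma$ is finitely satisfiable over $\FinOrd(\leq_1,[\precsim_2,S_2])$. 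For the forward direction, the representation of any data word model of $\varphi$ lies in $\FinOrd(\leq_1,[\precsim_2,S_2])$ and satisfies $\psi_\Sigma$, since in a data word each position carries exactly one symbol.

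For the backward direction I would show that every model $\calA$ of $\varphi\wedge\psi_\Sigma$ can be read back as a data word. By $\psi_\Sigma$ the unary predicates partition the universe, so $\leq_1$ lists the positions and assigns each a unique symbol. To recover the values I would enumerate the $\precsim_2$-classes as $C_1\prec_2\cdots\prec_2 C_m$ and assign them increasing naturals $v_1<\cdots<v_m$ with $v_{k+1}=v_k+1$ when the consecutive pair $(C_k,C_{k+1})$ is linked by $S_2$ and $v_{k+1}\ge v_k+2$ otherwise; since $S_2\subseteq\S_{\precsim_2}$ only ever links consecutive classes and does so uniformly by the partial-successor closure condition, setting $d_i:=v_k$ for $u_i\in C_k$ recovers both $\precsim_2$ (as $d_i\le d_j$) and $S_2$ (as $d_i+1=d_j$) exactly. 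Hence $\calA$ is the representation of a data word satisfying $\varphi$.

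Finally I would conclude by complexity bookkeeping: $\varphi\wedge\psi_\Sigma$ has size polynomial in $|\varphi|+|\Sigma|$, so deciding its finite satisfiability with the $\EXPSPACE$ procedure of Theorem~\ref{theo:twodcompl} decides finite satisfiability of $\varphi$ over data words within $\EXPSPACE$. I do not expect a real obstacle here; the only point requiring care is the backward direction, namely verifying that an arbitrary partial successor of $\precsim_2$ is always realized by some value assignment with gaps, so that models of $\varphi\wedge\psi_\Sigma$ correspond to legal data words and not to some strictly larger class.
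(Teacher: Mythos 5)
Your proof is correct and takes essentially the same route as the paper, which treats the corollary as an immediate consequence of Theorem~\ref{theo:twodcompl} via the correspondence between data words and ordered structures whose unary relations partition the universe; you merely make explicit (through $\psi_\Sigma$ and the gap-respecting value reassignment) what the paper leaves implicit as ``straightforward.'' The only caveat is that your ``bijection'' claim should really be read as surjectivity up to isomorphism onto the partition-labeled structures, which is all your backward direction needs and all it actually establishes.
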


It is worthwhile to compare Corollary \ref{cor:dw} with the following
result from \cite{BojanczykMuscholl+Two-Variable06}.
\begin{proposition}\label{prop:succundec}
  Finite satisfiability of $\FO^2$ over data words with a linear order
  and its  corresponding successor
  relation on the positions and a linear order on the data values is undecidable.
\end{proposition}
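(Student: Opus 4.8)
The plan is to reduce the Post Correspondence Problem to finite satisfiability of $\FO^2$ over these data words; since \PCP is undecidable, this gives the claim. Fix an instance with pairs $(u_1,v_1),\dots,(u_k,v_k)$ over an alphabet $\Gamma$, where a solution is a sequence $i_1\cdots i_m$ with $u_{i_1}\cdots u_{i_m}=v_{i_1}\cdots v_{i_m}$. I will build in polynomial time an $\FO^2$ sentence $\varphi$ whose finite models are exactly the encodings of solutions. The data-word alphabet $\Sigma$ is the product of $\Gamma$ with a bounded set of markers: a flag $U/V$, a block index in $\{1,\dots,k\}$, an offset bounded by $\max_i\{|u_i|,|v_i|\}$, and a super-block-start flag. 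A condition like $\sigma(x)$ for $\sigma\in\Gamma$ then abbreviates the disjunction over all $\Sigma$-symbols whose $\Gamma$-component is $\sigma$.

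The intended encoding of a solution \emph{interleaves} the two readings: along the positions I write the super-blocks $u_{i_1}v_{i_1},\,u_{i_2}v_{i_2},\,\dots,\,u_{i_m}v_{i_m}$, marking each letter as belonging to the $u$-part ($U$) or the $v$-part ($V$) and annotating it with the index of its super-block and its offset inside its part. Because two-variable logic can only relate $S_1$-adjacent or $<_1$-comparable elements, all ``reading'' of the word is pushed into these unary labels and checked locally: using $S_1$ I enforce that offsets increment inside a part and reset at part and super-block boundaries, that the index is constant across a whole super-block --- which is what forces the $u$-part and the $v$-part of a super-block to use the \emph{same} index, so the two readings share the index sequence automatically --- and that each $u$-part of index $i$ really spells $u_i$ and each $v$-part spells $v_i$. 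The last requirement is purely unary once offsets are available, since it only says that a letter of index $i$ and offset $c$ carries the symbol $u_i[c]$ resp.\ $v_i[c]$.

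The role of the linear order on the data values is to realise the only remaining, genuinely global, condition: that the concatenation of all $U$-letters equals the concatenation of all $V$-letters. This cannot be verified by walking the two readings in parallel, since in two variables there is no access to elements at a fixed distance. Instead I force an order-preserving bijection between the $U$-letters and the $V$-letters through the data order and compare matched letters by a single clause. Concretely I require that $<_1$ refines $\precsim_2$ on each part, $\forall x\forall y\,(U(x)\wedge U(y)\wedge x<_1 y\to x\prec_2 y)$ and likewise for $V$, and that every class meets both parts, $\forall x\,(U(x)\to\exists y\,(V(y)\wedge x\sim_2 y))$ and symmetrically; since every position is a $U$- or a $V$-letter, these clauses pin every $\precsim_2$-class down to exactly one $U$-letter and one $V$-letter, matched in increasing $<_1$-order, i.e.\ the $j$-th $U$-letter with the $j$-th $V$-letter. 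Equality of the readings is then the single clause $\forall x\forall y\,\bigl(x\sim_2 y\wedge U(x)\wedge V(y)\to\bigvee_{\sigma\in\Gamma}(\sigma(x)\wedge\sigma(y))\bigr)$, where $\sim_2$ and $\prec_2$ are used, both $\FO^2$-definable from $\precsim_2$.

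The hard part is exactly this last step, and it is where the \emph{order} on the data (rather than mere equality) is indispensable: I must argue that no model can cheat the matching, that is, that order-preservation together with the covering clauses really forces the $\precsim_2$-classes to be the rank-to-rank pairing of the two readings and nothing spurious --- the point being that with only data-equality no such order-faithful matching can be enforced, consistent with the decidability of that weaker setting. Granting this, correctness is routine in both directions: a solution yields an interleaved word satisfying every clause, and conversely any finite model of $\varphi$ decomposes, by the local checks, into super-blocks with a common index sequence and, by the matching clause, into two equal readings, hence a \PCP solution. The reduction is plainly polynomial, so finite satisfiability of $\FO^2(\Sigma,[\le_1,S_1],\precsim_2)$, and thus of $\FO^2$ over data words with a linear order and successor on positions and a linear order on the data values, is undecidable.
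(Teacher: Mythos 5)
Your reduction is correct and is essentially the standard argument for this statement: the paper does not prove Proposition \ref{prop:succundec} itself but quotes it from \cite{BojanczykMuscholl+Two-Variable06}, which --- as the paper remarks --- establishes it precisely by encoding \PCP in a two-variable fashion, with the linear order on the data values enforcing an order-preserving, rank-to-rank matching between the $u$-reading and the $v$-reading, exactly as in your key step. Your construction also closely mirrors the paper's own \PCP reductions in Section \ref{sec:lower} (Theorems \ref{theo:twopreorders} and \ref{theo:twostrictonelinearorder}), where the classes of one order realize your matching (there $S_i=\{u[i],v[i]\}$) while the other order carries the block structure; the only difference is that you exploit the position successor $S_1$ for the local spelling checks, which is precisely the extra resource available in this setting.
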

The two results might appear contradictory at first sight. However,
translated into the language of orders the two settings are indeed
different. In Corollary \ref{cor:dw} there is a 
total preorder with a corresponding successor relation plus a linear order, whereas in
Proposition \ref{prop:succundec} there is a linear order with its
underlying successor relation plus a total preorder. We can conclude
that the additional total preorder yields more expressive power in
that it allows to encode Post's correspondence problem in a
two-variable fashion as shown in \cite{BojanczykMuscholl+Two-Variable06}..

The lower bound of Theorem \ref{theo:twodcompl} will be shown in the
following section as Theorem \ref{theo:expspacetwoorders}.
The upper bound proof is given in this
section and consists of three main steps.
\begin{enumerate}[({3}.1)]
\item We first reduce finite satisfiability of $\FO^2(\leq_1,
  [\precsim_2, S_2])$ to finite satisfiability of
  $\FO^2(\calD)$-formulas of a syntactically restricted form in polynomial time.
\item The latter problem is then reduced to the two-dimensional
labeled point problem ($\LPP$), which will be defined below, in exponential time.
\item Finally, we show that
$\LPP$ can be solved in polynomial space. 
\end{enumerate}

\subsection{From ordered structures to labeled point sets}

By $\FO^2(\calD_-,I_y)$ we denote two-variable
logic with binary atoms $x\myd y$ using directions $\myd$
from $\calD_-$ (in infix notation) and binary predicate $I_y$. A
sentence $\varphi$ of this logic is {\em
  semi-positive} if it fulfills the following conditions.
\begin{enumerate}[(i)]
\item $\varphi$ is in negation normal form
(NNF).
\item There are no negated literals  $\neg(x\myd y)$ in $\varphi$.
However, the occurrence of negated atoms $x\not= y$ is not restricted.
\item Literals $I_y(x,y)$ and $\neg I_y(x,y)$ only appear in
  conjunction with positive atoms $\myd \in \calD_-$.  
\end{enumerate}

The correspondence between ordered structures and finite sets of
labeled points can be exploited to show the following result.

\begin{proposition}\label{prop:topoints}
  For each $\FO^2(\leq_1 [\precsim_2, S_2])$-sentence $\varphi$ a semi-positive
  $\FO^2(\calD_-, I_y)$-sentence $\psi$ can be computed in polynomial
  time such that $\psi$ and $\varphi$ are equivalent with respect to
  finite satisfiability over structures with a linear order and a
  total preorder with induced successor relation.
\end{proposition}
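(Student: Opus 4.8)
The plan is to translate $\varphi$ atom by atom via the correspondence between ordered structures and labeled point sets, exploiting that $\leq_1$ is a \lo: then every $\sim_1$-class is a singleton, so no two distinct points share an $x$-coordinate and the directions $\dn,\ds$ can never hold between distinct points --- which is exactly why $\calD_-$ suffices. After putting $\varphi$ into NNF I would use, for the ``general position'' reading of the correspondence (all $x$-coordinates distinct), the translations
\begin{align*}
x \leq_1 y &\;\rightsquigarrow\; (x \dne y)\vee(x\de y)\vee(x\dse y)\vee(x=y),\\
\neg(x\leq_1 y) &\;\rightsquigarrow\; (x\dnw y)\vee(x\dw y)\vee(x\dsw y),\\
x \precsim_2 y &\;\rightsquigarrow\; (x\dnw y)\vee(x\dne y)\vee(x\dw y)\vee(x\de y)\vee(x=y),\\
\neg(x\precsim_2 y) &\;\rightsquigarrow\; (x\dsw y)\vee(x\dse y),
\end{align*}
leaving equality literals and unary literals unchanged. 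The point is that each order literal \emph{and} its negation becomes a \emph{positive} disjunction of $\calD_-$-atoms (possibly together with an equality literal), because for distinct points with distinct $x$-coordinates the seven cases $=,\dnw,\dne,\dsw,\dse,\dw,\de$ partition all ordered pairs.

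The only genuinely new binary feature is the partial successor $S_2$. Since $S_2$ respects $\sim_2$-classes and is a sub-relation of the induced successor, between two consecutive $\sim_2$-classes it holds either on all pairs or on none; I would record this ``all-or-nothing'' choice by a fresh unary relation $T$ marking every element of a class whose upward successor edge belongs to $S_2$, together with the consistency axiom $\forall x\forall y\,(x\sim_2 y\rightarrow(T(x)\leftrightarrow T(y)))$. This axiom is semi-positive, as $\neg(x\sim_2 y)$ is again a positive $\calD_-$-disjunction and $\neg T$ is an allowed negated unary literal. Writing $A(x,y)$ for $(x\dnw y)\vee(x\dne y)$, the induced-successor edge over a distinct-$x$ point set is $A(x,y)\wedge\Iy(x,y)$, and I would translate
\[
S_2(x,y)\;\rightsquigarrow\; A(x,y)\wedge \Iy(x,y)\wedge T(x),
\]
while its negation is rendered \emph{without} any negated direction atom as
\[
\neg S_2(x,y)\;\rightsquigarrow\; \neg T(x)\;\vee\;(x\dw y)\vee(x\dsw y)\vee(x\dse y)\vee(x\de y)\vee(x=y)\;\vee\;\big(A(x,y)\wedge\neg\Iy(x,y)\big).
\]
Every $\Iy$-literal thus occurs in conjunction with a positive $\calD_-$-atom, so condition (iii) holds and the resulting sentence $\psi$ is semi-positive; as the translation is local and adds only a constant-size axiom, $|\psi|=O(|\varphi|)$ and $\psi$ is computable in polynomial time.

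For correctness I would argue both directions. From a finite model of $\varphi$ I build the point set of the correspondence, assigning each element a distinct $x$-coordinate according to $\leq_1$ and using consecutive integers $1,\dots,k$ as the $y$-coordinates of the $\sim_2$-classes, and setting $T(u)$ true iff some $v$ satisfies $S_2(u,v)$; this point set models $\psi$. The converse direction carries the real work and is the whole reason for semi-positivity: a point set satisfying $\psi$ need not be in general position, since $\calD_-$ cannot forbid two points from sharing an $x$-coordinate, and the structure naively read off such a set would interpret $\precsim_1$ as a proper \tpo rather than a \lo. The key lemma I would establish is a monotonicity property: spreading the points to distinct $x$-coordinates (keeping the weak $x$-order and all $y$-coordinates fixed) only ever turns the $\dn,\ds$ relations of tied pairs into genuine $\calD_-$-directions, i.e.\ it flips some direction atoms from false to true and leaves everything else --- in particular every $\Iy$-literal --- unchanged. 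Since $\psi$ is in NNF with direction atoms occurring only positively, such a change preserves satisfaction, so the perturbed, now general-position, point set still models $\psi$. Reading a structure off it yields a genuine \lo $\leq_1$, a \tpo $\precsim_2$, and --- defining $S_2$ by the translated formula and invoking the consistency axiom for $T$ --- a relation that is automatically a sub-relation of the induced successor respecting $\sim_2$-classes, hence a legitimate partial successor; faithfulness of the atomwise translation then produces a model of $\varphi$.

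The step I expect to be the main obstacle is precisely this backward direction: isolating and proving the monotonicity lemma and verifying that the reconstructed $S_2$ is a valid partial successor. Here it is fortunate that $S_2$ need only be \emph{partial}: defining it through $\Iy$ --- which may miss successor edges across gaps in the $y$-coordinates of a non-contiguous set --- still yields a valid relation, so the perturbed point set need not be made contiguous. This matters because, unlike the $x$-perturbation, contiguity cannot be enforced by a semi-positive sentence, its natural axiomatization requiring a negated direction atom.
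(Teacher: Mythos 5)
Your proposal is correct, and its core --- the literal-by-literal translation of $\leq_1$- and $\precsim_2$-atoms into positive $\calD_-$-disjunctions --- is exactly the paper's table; you diverge on the two delicate points, and in each case the paper uses a shorter device. For general position, the paper simply conjoins the sentence $\chi=\forall x\forall y\,\bigl(x=y\lor\bigvee_{\myd\in\calD_-}x\myd y\bigr)$, which is itself semi-positive and forces every model of $\psi$ to have pairwise distinct $x$-coordinates, so its backward direction is an immediate reading-off. Note that this contradicts your aside that ``$\calD_-$ cannot forbid two points from sharing an $x$-coordinate'': it can, semi-positively; the claim is true only of your particular $\psi$, which omits $\chi$. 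Your alternative --- the perturbation/monotonicity lemma --- is nevertheless sound, since in NNF with direction atoms occurring only positively, flipping such atoms from false to true while every other literal (unary, equality, $\Iy$) is unchanged preserves satisfaction; it is in effect the semantic counterpart of the observation that $\chi$ is admissible, and it has the merit of making explicit why semi-positivity is the right syntactic class, a point the paper leaves implicit. For the successor, the paper translates $S_2(x,y)$ as $(\Iy(x,y)\wedge x\dne y)\lor(\Iy(x,y)\wedge x\dnw y)$ with no auxiliary predicate, absorbing the partiality of $S_2$ into the freedom to leave gaps between the $y$-coordinates of consecutive $\sim_2$-classes --- exactly the non-contiguity you invoke at the end --- whereas you keep the classes contiguous and record the present successor edges in a fresh unary $T$ with a consistency axiom. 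Both encodings are faithful (though to meet condition (iii) literally you should distribute your conjunction $A(x,y)\wedge\Iy(x,y)\wedge T(x)$ over the two disjuncts of $A$), and both, as you rightly stress, establish the proposition in the partial-successor reading of $[\precsim_2,S_2]$ that the main theorem requires.
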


\begin{myproof}
We assume without loss of generality that $\varphi$ is given in
negation normal form. The target formula $\psi$ is of the form
$\chi\land\varphi'$ where $\chi$ ensures that no
two points are on the same vertical line and $\varphi'$ is the actual
translation of $\varphi$ into the setting of labeled points.

More precisely, 
\[
\chi=\forall x \forall y\;\; x=y\lor\bigvee_{\myd\in\calD_-} x\myd y,
\]
and $\varphi'$ is obtained from $\varphi$ by translating its binary literals as
follows.
\[\small
  \begin{array}[c]{|c|c|}
    \hline \varphi & \varphi'\\\hline
x\le_1 y & x=y \lor x\dne y \lor x \de y \lor x \dse y\\
\neg(x\le_1 y) & x\dnw y \lor x \dw y \lor x \dsw y\\
x\precsim_2 y & x=y \lor x\de y \lor x \dw y \lor x
  \dne y \lor x \dnw y\\
\neg(x\precsim_2 y) & x\dse y \lor x \dsw y\\
S_2(x, y) & (I_y(x,y) \wedge x \dne y) \lor (I_y(x,y) \wedge x \dnw y) \\
\neg S_2(x,y) &  (\neg I_y(x,y) \wedge x \dne y) \lor (\neg I_y(x,y)
\wedge x \dnw y) \lor x=y \lor x \dw y \lor x \de y \lor x \dsw y \lor x \dse y \\

\hline
  \end{array}
\]

The translation of literals as $y\le_1 x$ is analogous.

The correctness of this translation relies on the fact that $\chi$
ensures that no two points are on the same vertical line. 
For example, for $x\le_1 y$, no disjuncts $x\narrow y$ or $x\sarrow y$ are
needed. It is easy to verify that $\psi$ is indeed semi-positive. 
\end{myproof}

\subsection{From Two-Variable Logic to Two-Dimensional Constraints}

We next define the 2-dimensional labeled point problem, $\LPP$. 
For an alphabet $\Sigma$, a {\em $\Sigma$-labeled
  point} $p$ is an element from $\nat^2\times\Sigma$. 
Of course, $\Prop$-labeled points can be considered as
  $\Sigma$-labeled points with $\Sigma=2^\Prop$. However, the number of
  symbols is exponential in the number of propositions. Likewise,
  $\Sigma$-labeled points can be encoded as points over a 
   suitable set of propositions but in that direction the encoding is
   less canonical. One could, e.g., use one proposition per symbol of
   $\Sigma$ or a logarithmic number of propositions.
 We write $p.l$ for the label of a $\Sigma$-labeled point $p$.

A \textit{directional constraint} is a pair $(\myd,s_d)$  from $\calD\times
\{\S, \nS\}$. 
We denote the set of directional constraints by $\calC$. By $\calC_-$
we denote the set of directional constraints $(\myd,s_d)$ with $\myd\in\calD_-$. 

We say a pair $(p,q)$ of $\Sigma$-labeled points \textit{satisfies} a directional constraint $d = (\myd, s_d)$, if $p \myd q$ and the following conditions are fulfilled. 
\begin{iteMize}{$\bullet$}
  \item $p.y+1 = q.y$, if $s_d = S$.
  \item $p.y +1 \not= q.y$, if $s_d = \nS$.
\end{iteMize}

Note that the constraints $(\de, \S)$ and $(\dw, \S)$ are not satisfied for any pair of points. By $\not=$  we denote an \textit{inequality constraint} (and there is
only one such constraint). A pair $(p,q)$ satisfies the inequality
constraint $\not=$ if $p\not=q$. 

A \textit{position
constraint} is either a directional or an inequality constraint.


An {\em existential constraint ($\exists$-constraint)} is a pair
$(\sigma,E)$ where $\sigma \in \Sigma$ and $E$ is a possibly empty set
of pairs $(\tau,d)$, where $\tau \in \Sigma$ and $d$ is a position constraint from $\calC_-\cup\{\not=\}$. For a set $\calP$  of $\Sigma$-labeled points and a point $p$, we say $p$
{\em satisfies} an $\exists$-constraint  $(\sigma,E)$ if either
$p.l\not=\sigma$ or there
is $q\in \calP$ such that, for some $(\tau,d)$ in $E$, $q.l=\tau$ and $(p,q)$ \mbox{satisfies 
$d$.} 

A {\em universal constraint ($\forall$-constraint)} is a tuple
$(\sigma,\tau,d)$ where $\sigma,\tau\in\Sigma$ and $d$ is a
directional constraint. A pair $(p,q)$ of points from $M$ {\em satisfies} a
$\forall$-constraint  $(\sigma,\tau,d)$ if it is {\em not the case}
that $p.l=\sigma$, $q.l=\tau$, and $(p,q)$ satisfies $d$.

An input $L=(\Sigma,C_\exists,C_\forall)$  to the {\em two-dimensional
  labeled point problem ($\LPP$)} consists of an alphabet
$\Sigma$, a set $C_\exists$ of existential constraints and a set $C_\forall$
of universal constraints. 
A non-empty set $M\subs \nat^2\times\Sigma$ is a {\em solution} of
$L$ if every point of $M$ satisfies all constraints from $C_\exists$
and every pair of distinct points satisfies all constraints from
$C_\forall$. It should be noted that $C_\exists$ specifies required
patterns whereas $C_\forall$ specifies forbidden patterns.

\begin{proposition}\label{prop:tolpp}
 From every
  semi-positive $\FO^2(\calD_-,I_y)$-sentence $\varphi$ an instance $L$ of
  $\LPP$ can be computed in exponential time such that
  $\varphi$ is finitely satisfiable if and only if $L$ has
  a finite solution.
\end{proposition}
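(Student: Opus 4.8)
The plan is to follow the usual route from a two-variable formula to a system of local constraints, adapted to the point-set geometry. First I would bring $\varphi$ into Scott normal form, i.e.\ an equisatisfiable sentence
\[
\forall x\forall y\,\alpha(x,y)\ \wedge\ \bigwedge_{i=1}^{m}\forall x\exists y\,\beta_i(x,y),
\]
where $\alpha$ and the $\beta_i$ are quantifier-free and use only the atoms $x\myd y$ (with $\myd\in\calD_-$), $\Iy(x,y)$, $x=y$ and the unary predicates. This is done in polynomial time by the standard renaming of nested quantified subformulas with fresh unary predicates. The point I would stress is that this preserves semi-positivity: renaming a positively occurring subformula $\exists y\,\theta$ (resp.\ $\forall y\,\theta$) introduces only the axiom $\forall x\exists y(\neg P(x)\vee\theta)$ (resp.\ $\forall x\forall y(\neg P(x)\vee\theta)$), whose only new literals are the negated \emph{unary} literals $\neg P$, so no direction atom is ever negated and the guarding condition on $\Iy$ is inherited from $\theta$.

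Next I would fix the alphabet $\Sigma$ to be the set of atomic $1$-types over the (extended) unary vocabulary and label every point by its $1$-type; this is the single exponential blow-up, so the whole construction runs in exponential time. The key observation is that, once a point carries its full $1$-type, the truth value of the quantifier-free $\alpha(p,q)$ (resp.\ $\beta_i(p,q)$) on a pair of \emph{distinct} points depends only on the two labels $\sigma=p.l$ and $\tau=q.l$, on which direction $\myd\in\calD$ relates $p$ and $q$, and on whether $p.y+1=q.y$; and this last bit is exactly the successor flag $s_d\in\{\S,\nS\}$ of a directional constraint $(\myd,s_d)$, since $s_d=\S$ iff $\Iy(p,q)$.

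I would then read off the two constraint sets. For the universal part I put $(\sigma,\tau,(\myd,s_d))$ into $C_\forall$ for every pair of types and every directional constraint in $\calC$ (vertical directions included, which is precisely why $C_\forall$ is permitted to range over all of $\calC$) for which $\alpha$ evaluates to false in that configuration; the diagonal $\alpha(x,x)$ together with any purely unary requirement is handled by forbidding offending labels through empty existential constraints $(\sigma,\emptyset)$. For the existential part, for each type $\sigma$ and each $i$: if $\beta_i(x,x)$ already holds for type $\sigma$ there is nothing to require (the point witnesses itself); otherwise I collect into $E$ all pairs $(\tau,d)$ with $d\in\calC_-\cup\{\neq\}$ for which a distinct $\tau$-point placed in configuration $d$ satisfies $\beta_i$, and add $(\sigma,E)$ to $C_\exists$. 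Finally I would prove that a point set $M$, labelled by $1$-types, models the normal form if and only if it is a solution of the resulting instance $L$, recovering a model from a solution by reading the propositions off the labels.

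The step I expect to be delicate --- and where semi-positivity is really used --- is the treatment of existential witnesses that happen to lie on the same vertical line as $x$, a configuration ($\dn$ or $\ds$) that $\calC_-$ cannot name. Here conditions (ii) and (iii) save the day: with all $\calD_-$-atoms false and every $\Iy$-literal guarded by such an atom, $\beta_i$ collapses to its $\{=,\neq\}$-reduct on a vertical pair, and because the remaining direction atoms occur only positively, the formula is monotone under turning direction atoms true. Consequently $\beta_i$ holds for a vertical witness if and only if it holds for \emph{every} distinct placement of that witness, so such witnesses are faithfully captured by the coarse constraint $(\tau,\neq)$, and no genuinely vertical requirement is ever lost or spuriously accepted. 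Verifying this monotonicity, together with the soundness of the $(\sigma,\emptyset)$ encoding for the diagonal, is the technical heart of the correctness argument.
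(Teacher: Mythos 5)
Your proposal follows essentially the same route as the paper: a semi-positivity-preserving Scott normal form with the same one-directional renaming axioms (justified, as you note, by positivity of the renamed occurrences), the alphabet taken to be the atomic $1$-types, $C_\forall$ read off from forbidden binary configurations (with vertical directions allowed), and $C_\exists$ obtained by a disjunct-by-disjunct analysis in which the diagonal $x=y$ and the bare $x\neq y$ disjuncts play exactly the roles you give them. Your $(\sigma,\emptyset)$ trick is an equivalent substitute for the paper's convention of removing disallowed types from $\Sigma$, and your monotonicity argument for vertical witnesses is a sound semantic counterpart of the paper's syntactic observation that, in a semi-positive DNF, the only disjuncts a vertical pair can satisfy are those built from $=$, $\neq$ and unary literals.

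There is, however, one concrete gap. You claim that the truth value of $\alpha(p,q)$, resp.\ $\beta_i(p,q)$, on a distinct pair is determined by the two labels, the direction relating $p$ and $q$, and the single bit ``$p.y+1=q.y$''. This fails as soon as the formula contains the reversed atom $\Iy(y,x)$, whose truth ($q.y+1=p.y$) is not determined by those data: for instance with $p \dse q$ and $p.y+1\neq q.y$, both values of $\Iy(q,p)$ are possible. Such reversed atoms genuinely arise in the fragment, since the translation of Proposition \ref{prop:topoints} produces them from literals $S_2(y,x)$. On such inputs your extraction of $C_\forall$ is ill-defined, and your extraction of $C_\exists$ is either unsound or loses requirements, depending on how one reads ``a $\tau$-point placed in configuration $d$ satisfies $\beta_i$''. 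The paper's fix is local: a full binary type records which of $S(x,y)$, $S(y,x)$, or neither holds (given the direction, at most one is possible); a forbidden configuration involving $S(y,x)$ becomes a universal constraint on the \emph{reversed} pair, with flipped direction and swapped labels --- e.g.\ $\sigma(x)\land\tau(y)\land x\dse y\land S(y,x)$ yields $(\tau,\sigma,(\dnw,\S))$ --- and for existential disjuncts the successor flag is read as line-adjacency oriented by the vertical component of the direction. Incorporating this does not change your architecture, but without it the reduction is not correct for all semi-positive sentences.
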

\begin{myproof}
First, $\varphi$ can be translated into a semi-positive
$\FO^2(\calD_-,I_y)$-sentence  $\varphi'$ in Scott 
normal form (SNF)
\[
\forall x \forall y\; \psi(x,y) \wedge \bigwedge_{i=1}^m \forall x \exists y\; \psi_i(x,y),
\]
that has a finite model if and only if $\varphi$ has a finite
model. The translation can be done in a way that ensures that
\begin{enumerate}[(1)]
\item $\psi$ and all $\psi_i$ are quantifier-free semi-positive formulas,
\item $\varphi'$ is of linear size in the size of $\varphi$,
\end{enumerate}
In general, $\varphi'$ uses more unary relation symbols than
$\varphi$.

The translation mimics the proof of Theorem 2.1 in
\cite{GraedelO1999}. However, we need to be a bit careful to get
$\varphi'$ semi-positive.

The transformation is done in several rounds. After each round, a
formula of the form $\theta_i\land\varphi_i$ is obtained that is
equivalent to $\varphi$ with respect to (finite) satisfiability. Here,
$\theta_i$ is already in SNF and $\varphi_i$  has $i$ quantifiers less
than $\varphi$. Furthermore, $\varphi_i$ uses additional new unary
predicates $P_1,\ldots,P_i$ and is semi-positive. 
Initially, we have $\theta_0=\top$ and
$\varphi_0=\varphi$. In round $i$, a subformula $\chi_i(x)$ of the form $\exists
y\; \rho_i(x,y)$ or the form $\forall y\; \rho_i(x,y)$ with quantifier-free formula $\rho_i$ is
chosen from $\varphi_{i-1}$. 
 In the former case,
 \begin{iteMize}{$\bullet$}
 \item $\theta_i=\theta_{i-1} \land \forall x \exists y (P_i(x) \ra
   \rho_i(x,y))$ and
 \item $\varphi_i$ is obtained from $\varphi_{i-1}$ by replacing $\exists
y\; \rho_i(x,y)$ with $P_i(x)$.
 \end{iteMize}

In the latter case,
\begin{iteMize}{$\bullet$}
\item $\theta_i=\theta_{i-1} \land \forall x \forall y (P_i(x) \ra
   \rho_i(x,y))$ and
 \item $\varphi_i$ is obtained from $\varphi_{i-1}$ by replacing $\forall
y\; \rho_i(x,y)$ with $P_i(x)$.
\end{iteMize}

Clearly, this process has as many rounds as the initial formula
$\varphi$ has quantifiers. It is easy to see that each $\theta_i$ is
semi-positive and in SNF and each $\varphi_i$ is semi-positive as well. Let
$\theta_k\land\varphi_k$ be the resulting formula.  $\varphi_k$
does not contain any quantifiers and is equivalent to $\forall x
\forall y\; \varphi_k$. Therefore, $\varphi'=\theta_k\land \forall x
\forall y\; \varphi_k$ is in SNF (after combining all
$\forall\forall$-conjuncts into one
$\forall\forall$-formula). Furthermore, the size of $\varphi'$ is
linear in the size of $\varphi$. However, the signature has been
extended by $k$ unary relation symbols.

It remains to show that $\varphi'$ is finitely satisfiable if and only
if $\varphi$ is finitely satisfiable. To this end,
we show by induction on $i$ that, for every $i>0$, $\theta_i\land\varphi_i$ is finitely
satisfiable if and only if $\theta_{i-1}\land\varphi_{i-1}$ is finitely
satisfiable.
Indeed, if $A_{i-1}\models \theta_{i-1}\land\varphi_{i-1}$, a
model\footnote{Here, $(B,P)$ denotes the extension of a structure $B$
  by a relation $P$.} $A_i=(A_{i-1},P_i)$
for   $\varphi_i$ can be
obtained from $A_{i-1}$ by letting $P_i=\{a\mid A \models \chi_i(a)\}$. On the
other hand, assume that $A_i=(A_{i-1},P_i)$ is a model for
$\theta_i\land\varphi_i$ in the case that $\chi_i(x)$
is of the form $\exists y\; \rho_i(x,y)$. Thus, $A_i\models\theta_i$ guarantees
that for every $a$ with $A_i\models P_i(a)$ the formula $\exists
y\; \rho_i(a,y)$ is satisfied by $A_i$ and thus also $A_{i-1}\models \exists
y\; \rho_i(a,y)$ (as $P_i$ does not occur in $\rho_i$). 
As $\varphi_{i-1}$ is in NNF, the occurrence of $\exists
y\; \rho_i(a,y)$ is not in the scope of any negations, therefore 
$A_i\models\varphi_i$ implies $A_{i-1}\models \varphi_{i-1}$. 
 Furthermore, $A_i\models\theta_i$ also
implies  $A_{i-1}\models\theta_{i-1}$. If $\chi_i(x)$
is of the form $\forall y\; \rho_i(x,y)$ the reasoning is analogous. This completes the proof that $\varphi'$ is finitely satisfiable if and only
if $\varphi$ is finitely satisfiable.

In order to continue the translation into a \LPP-instance, we next
describe how to obtain $\forall$-constraints from $\psi$. A \textit{full
atomic type for $x$ and $y$} consists of
\begin{iteMize}{$\bullet$}
\item a full atomic type $\sigma$
for $x$,
\item a full  atomic type $\tau$
for $y$, and
\item a conjunction of binary literals concerning $x$ and $y$ which
  can be either
  \begin{iteMize}{$-$}
  \item $x=y$ or
  \item 
the conjunction of a direction atom $x\myd y$ with
    $\myd\in\calD$ and one of the formulas
    $S(x,y)$ or $S(y,x)$
or $\neg S(x,y)\land \neg S(y,x)$.  
  \end{iteMize}
\end{iteMize}
It should
be noted that we do not allow $x\not=y$ in full atomic types as
this literal is equivalent to the disjunction of all other possible
relationships besides $x=y$.  

We further note that each full atomic type which does not contain the
  formula $x=y$ corresponds to a universal constraint. For instance,
  $\sigma(x)\land\tau(y)\land x\dne y\land \neg S(x,y)\land \neg
  S(y,x)$ corresponds to $(\sigma,\tau,(\dne,\neg 
S))$ and $\sigma(x)\land\tau(y)\land x\dse
y\land S(y,x)$ corresponds to $(\tau,\sigma, (\dnw,S))$.  The reader
should note the direction
change and the switch of $\sigma$ and $\tau$ in the latter example. It
is due to the translation of the atom $S(y,x)$.

For every full atomic binary or unary type, $\psi$ determines whether this type is
allowed or forbidden in a model of $\varphi'$. We let $\Sigma$ be the set of all
\textit{unary} atomic types $\sigma$ of $x$ for which $\sigma(x)\land\sigma(y)\land
x=y$ is \textit{allowed}. Furthermore, we let $C_\forall$ be the set of all
$\forall$-constraints that are obtained from the \textit{forbidden binary types}. 
Clearly, the size of $C_\forall$ is at most exponential in $\varphi$.

It should finally be noted that, although $\varphi$ only uses directions
  from $\calD_-$, the resulting universal constraints might contain
  the directions $\dn$ or $\ds$. This is for instance necessary in the case that $\psi$ implies
$\forall x \forall y\;(\sigma(x)\land\sigma(y) \ra x = y)$, since here universal constraints $(\sigma, \sigma, \myd)$ are needed for all $\myd \in \calD$.

By transforming every $\psi_i$ into DNF and some additional simple steps, every conjunct $\forall x \exists y\;
\psi_i(x,y)$ of $\varphi'$ can be  rewritten as
\[
\forall x\; \bigwedge_{j=1}^K\big(\sigma_j(x) \rightarrow
 \exists y\;\bigvee_{\ell=1}^M (\tau_{j\ell} \land\psi_{j\ell})\big),
\]
where the $\sigma_j$ describe pairwise distinct full atomic types,
every $\tau_{j\ell}$ is a full atomic type and
$\psi_{j\ell}$ is one of the following.
\begin{enumerate}[(1)]
\item a formula $x\myd y \wedge S(x,y)$ with $\myd\in\calD_-$,
\item a formula $x\myd y \wedge \neg S(x,y)$ with $\myd\in\calD_-$,
\item an atom $x= y$, 
\item a literal $x\not= y$.
\end{enumerate}

For each $j\le K$,  the
disjuncts with formulas $\psi_{j\ell}$ of types (1) and (3) can be combined into one
$\exists$-constraint. Disjuncts with formulas of type (2) can be
treated as follows. If $\sigma_j=\tau_{j,\ell}$ then the $j$-th
conjunct can be completely removed as it becomes a tautology.
If $\sigma_j\not=\tau_{j,\ell}$, the disjunct can be deleted as it can
not be satisfied. If this results in an empty disjunction then
$\sigma_j$ is deleted from
$\Sigma$.
The numbers $K$ and $M$ are at most exponential in
$|\varphi|$.

Altogether, we obtain an instance $L=(\Sigma,C_\exists,C_\forall)$ of $\LPP$
such that a $\Sigma$-labeled point set is a solution to $L$ if and only if
the corresponding labeled point set  $M$ is a model of $\varphi$.
Furthermore, the size of $L$ is at most exponential in $|\varphi|$ and
the construction of $L$ from $\varphi$ can be done in exponential time.
 \end{myproof}


\subsection{$\LPP$ is in $\PSPACE$}

In order to prove the upper bound of Theorem \ref{theo:twodcompl} it would be sufficient
to give a polynomial space algorithm that decides whether a given
\LPP-instance has a solution where no two points are on the same
vertical. However, there is an algorithm that decides the existence of
{\em any} finite solution. 

\begin{proposition}\label{prop:lpp}
  Whether an instance $L$ of $\LPP$ has a finite solution can be
decided in polynomial space. 
\end{proposition}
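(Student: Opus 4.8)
The plan is to build a hypothetical solution by sweeping the plane from the bottom row to the top row, and to recast this sweep as a reachability problem over a graph whose vertices (``configurations'') admit a description of size polynomial in $|L|$. Since nondeterministic polynomial space coincides with deterministic polynomial space by Savitch's theorem, it suffices to exhibit a nondeterministic algorithm that guesses the rows of the solution one after another, stores only the current configuration, and accepts exactly when all constraints can be met. Here a \emph{row} is the set of points sharing a common $y$-coordinate; the points of one row carry pairwise distinct $x$-coordinates and are linearly ordered from west to east, and two consecutive occupied rows are either $\Iy$-adjacent or separated by a gap in $y$. The feature exploited throughout is the asymmetry of the two axes: the $x$-axis carries no successor relation, so only the relative order of $x$-coordinates matters and points may be inserted freely between existing columns, whereas the $y$-axis is constrained by $\Iy$.

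The configuration kept after committing to rows $0,\dots,i$ records three kinds of bounded information. First, for the forbidden patterns of $C_\forall$ using a northern direction ($\dne,\dn,\dnw$), it stores, for each label $\sigma\in\Sigma$, the relative positions of the leftmost and rightmost $\sigma$-points placed so far; since a northern forbidden pattern $(\sigma,\tau,\myd)$ rules out a future $\tau$-point on a prescribed side of \emph{every} past $\sigma$-point, the single binding threshold per label and side suffices, and these at most $2|\Sigma|$ thresholds are recorded merely as a linear order, which is of polynomial size. Second, it stores the pending \emph{northern} existential obligations (witnesses in directions $\dne,\dnw$); because an existential constraint $(\sigma,E)$ is attached to a label, all $\sigma$-points share the same disjunctive obligation, so these may likewise be grouped by label and summarised by a binding threshold per witness type: a witness placed far enough in a later row discharges the obligations of all like-labelled points at once. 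Third, it stores a summary of the topmost committed row $i$ — again only the per-label extreme $x$-positions and the per-label pending \emph{adjacent}-northern obligations (those demanding an $\Iy$-witness) — which is all the next row needs for the $\Iy$-sensitive checks.

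A transition appends a new row. I would guess whether the new row is $\Iy$-adjacent to row $i$, guess the row itself in a normal form as a polynomially bounded sequence of labelled blocks inserted into the relative $x$-order recorded by the current configuration, and then verify in polynomial space: the within-row universal constraints and the horizontal and $\neq$ existential constraints; the southern existential obligations (directions $\dsw,\dse$) of the new points against the markers for rows $0,\dots,i$, which are already fixed and so decide these obligations immediately; the $\Iy$-sensitive universal and existential constraints against the stored summary of row $i$, discharging the adjacent-northern obligations carried by row $i$; and the northern far-obligations that the new row chooses to fulfil. The configuration is then updated by folding the new row into the markers, replacing the topmost-row summary, and deleting the discharged obligations. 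A run accepts once the constructed point set is non-empty and no pending northern obligation remains; since there are only exponentially many configurations, a solution, if any exists, is reached along a path whose length is bounded by a polynomial-bit counter, so the search halts.

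The main obstacle is to justify that these bounded summaries lose no information and that rows may indeed be taken in a bounded normal form. Concretely, one must prove a normal-form (small-model) statement: if $L$ has any finite solution then it has one whose every row is describable by polynomially many label-blocks, so that the per-transition guess is of polynomial size; and one must establish the soundness and completeness of the extreme-marker and binding-threshold abstractions. The delicate point is the interaction between $C_\forall$ and $C_\exists$, since forbidden patterns restrict \emph{where} a witnessing label may be placed while existential constraints demand that it be placed \emph{somewhere}, and the disjunctive form of $(\sigma,E)$ permits several alternative witnesses. Here the freedom of the $x$-axis is decisive: as any relative $x$-order compatible with the recorded thresholds is realisable, a pending northern obligation can be discharged precisely when some later row contains the required label in a position permitted by the active forbidden patterns, a condition depending only on the polynomially many thresholds in the configuration. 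Verifying this equivalence carefully, together with its analogues for the southern and horizontal obligations, is the technical heart of the argument; once it is in place the nondeterministic sweep runs in space polynomial in $|L|$, and Savitch's theorem yields the claimed $\PSPACE$ bound.
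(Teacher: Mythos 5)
Your overall strategy---a nondeterministic bottom-to-top sweep storing a polynomial-size state, followed by Savitch's theorem---is the same as the paper's, but the state you propose does not work, and it breaks exactly at the step you defer as ``the technical heart''. The problem is your treatment of pending existential obligations. An existential constraint $(\sigma,E)$ is disjunctive and is imposed on \emph{each} $\sigma$-point separately; obligations of same-labelled points at different columns are not interchangeable, so they cannot be ``grouped by label and summarised by a binding threshold per witness type''. Concretely, take $E=\{(\tau_1,(\dne,\nS)),\,(\tau_2,(\dnw,\nS))\}$ and two pending $\sigma$-points in old rows at columns $1$ and $9$. A perfectly good solution covers the point at column $1$ by a $\tau_1$-witness at column $5$ and the point at column $9$ by a $\tau_2$-witness at column $3$; neither witness passes the label-wide threshold (rightmost pending column $9$ for $\tau_1$, leftmost pending column $1$ for $\tau_2$), so your discharge test rejects a satisfiable run---incompleteness. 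If instead you read the summary as keeping only the extreme pending points and checking that those are covered, the failure flips to unsoundness: with pending columns $1,5,9$, a $\tau_1$-witness at column $2$ and a $\tau_2$-witness at column $7$ cover both extremes but leave the point at column $5$ with no witness, yet the summary reports everything discharged. Since pending points can accumulate at unboundedly many distinct relative columns across rows, no obvious per-label threshold refinement fixes this; together with the unproven row normal form, this is precisely the content a proof must supply, and your write-up explicitly leaves it open.

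The paper sidesteps global pending obligations altogether, and this is the idea your proposal is missing. Instead of a past-summary plus promises about the future, it attaches to \emph{every} horizontal line a \emph{profile} recording, per label, the leftmost and rightmost columns of points in each of five vertical categories relative to that line (same line, adjacent above, non-adjacent above, adjacent below, non-adjacent below). Each line thus carries its own description of the future as well as of the past, and each line's existential and universal constraints are checked against that line's own profile; for a fixed line and a fixed vertical category the witnessing condition is one-dimensional (column only), so extremes genuinely do suffice there---this is how the disjunction problem disappears. The price is that the guessed claims about the future must be shown realizable, which the paper does via a consistency relation on adjacent profiles (the synchronizing function and conditions (C1), (C2)) and the characterization of Proposition \ref{prop:modelcharacterisation}: $L$ has a finite solution if and only if there is a finite consistent sequence of $L$-valid profiles. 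That lemma, plus the observations that profiles have size $O(|\Sigma|)$ and that the sequence length can be bounded by the exponential number of distinct profiles, yields the \PSPACE{} algorithm. To salvage your architecture you would have to replace the per-label thresholds by something at least as informative as these per-line, two-way profiles.
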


The algorithm follows a plane sweep approach, that is, it guesses a
solution for $L$ (horizontal) line by line. The crucial point is to
show that this is possible in polynomial space.

With 
every line $l$ of a set of labeled points we associate a
\textit{profile} that, roughly speaking, contains all neccessary
information to validate the constraints from $L$ for points on line
$l$. Profiles respecting the constraints from $L$ will be called
\textit{valid}.  We will prove that a solution of $L$ can be
constructed from a sequence of valid profiles where every pair of
successive profiles is \textit{consistent}. Then, the algorithm guesses a sequence of profiles and verifies
validity and consistency locally. Since the size of each profile is
polynomial in the size of $L$, this is a polynomial space algorithm
proving Proposition \ref{prop:lpp}.

%

  Now, we fill in the details of the proof, and start by giving some definitions. 
  \subsubsection*{Profile of a Horizontal Line.} 

A \textit{horizontal line} of a set $\calP$ of labeled points is a
number $r\in\nat$ such that $\calP$ contains points $p$ and $q$ with
$p.y\le r\le q.y$. Intuitively, a horizontal line is just a line that
lies between the top-most and the bottom-most point of $\calP$. We say
that a point from $\calP$ \textit{lies on $r$} if $p.y=r$.
We associate with every horizontal line $r$ of a set $\calP$ a profile
$\Pro(r)$ that represents information about the points on $r$ and on
other points of $\calP$ that are ``important'' with respect to
$\exists$-constraints and $\forall$-constraints.  Intuitively, for
each symbol and each ``direction'' the profile contains the $x$-value
of the leftmost and the rightmost point in that direction. 

Let $D_P :=
  \{\pe\}\cup (\{\pup, \pdown\} \times \{\S, \nS\}) $ denote the set
  of \textit{profile constraints}. 
For two rows $r,s$ we denote by
$d(r,s)$ the profile constraint of $s$ relative to $r$, that is
\[
d(r,s)=
\begin{cases}
  (\dn,\nS) & \text{if $s>r+1$},\\
  (\dn,\S) & \text{if $s=r+1$},\\
  \pe & \text{if $s=r$},\\
  (\ds,\S) & \text{if $s=r-1$},\\
  (\ds,\nS) & \text{if $s<r-1$}.
\end{cases}
\]

A {\em profile} is a sequence $P=A_1
  \ldots A_k$ of non-empty sets of  pairs from $\Sigma \times D_P$,
  such that
  \begin{iteMize}{$\bullet$}
    \item every set $A_i$ contains at most one pair of the form $(\sigma,\pe)$,
    \item for each $\sigma$ there are at most two indices $i,j$ such
      that $(\sigma,\pe)\in A_i$ and $(\sigma,\pe)\in A_j$, and
    \item for every $i$ there is some $(\sigma, d)\in A_i$ such that
      $i$ is either the minimal or the maximal index  of a set containing $(\sigma, d)$.
  \end{iteMize}
The latter two conditions ensure that the number of sets in a  profile
is at most $10 |\Sigma|$. 
   A profile is a \textit{top profile}, if none of its sets contains
   any pair $(\sigma, (\pup, \S))$ or  $(\sigma, (\pup,
   \nS))$. Analogously a \textit{bottom profile} is a profile that
   does not contain any pairs $(\sigma, (\pdown, \S))$ or  \mbox{$(\sigma, (\pdown, \nS))$}. 

Let, for the following, a finite labeled point set $\calP\subs\nat^2$
be fixed. Let $\calP'$ be the set of all points $p$ from $\calP$ for which $p.x$ is minimal or
maximal within all points with label $p.l$ in row $p.y$.  More precisely let
\begin{eqnarray*}
  \begin{split}
    \calP'=&\{p\in \calP\mid \neg\exists q\in\calP: q.l=p.l, q.y=p.y,
    q.x<p.x\}\\
& \cup \{p\in \calP\mid \neg\exists q\in\calP: q.l=p.l, q.y=p.y,
    q.x>p.x\}\\
  \end{split}
\end{eqnarray*}
As $\calP'$ has, for each label $\sigma$, at most two $\sigma$-labeled
points per row, it has at most $2|\Sigma|$ points per row altogether.

We describe next, how the profile $\Pro(r)$ is constructed, for every
horizontal row $r$ of $\calP'$. Let thus, a row $r$ be fixed in the following.
For each
$\sigma\in\Sigma$ and every profile constraint $d\in D_P$, we
choose a point $q=q(\sigma,d,\min,r)\in\calP'$ with minimal
value $q.x$ such that $d(r,q.y)=d$ and a point $q=q(\sigma,d,\max,r)\in\calP'$ with maximal
value $q.x$ such that $d(r,q.y)=d$.
Note that  $q(\sigma,d,\min,r)$ and  $q(\sigma,d,\max,r)$ can be equal and do not necessarily exist for every $\sigma$, $d$ and $r$.

Let $A'_{r,1},\ldots,A'_{r,k_r}$ be the equivalence classes of points
of the form $q=q(\sigma,d,\min,r)$ and $q=q(\sigma,d,\max,r)$ with
arbitrary $\sigma$ and $d$ and
the same value $q.x$, ordered by $q.x$. For every $i\le k_r$, we denote by $A_{r,i}'.x$  the value of the $x$-coordinate of the
points in $A_{r,i}'$. 

Even though the classes in $A'_{r,1},\ldots,A'_{r,k_r}$ contain
  all points that are relevant for the validity of the given
  constraints, it might be necessary to add some more points for
  technical reasons, as follows.

For each $i\le k_r$, let the set $A''_{r,i}$ be defined as follows.

\[
  \begin{split}
    A''_{r,i}  =  A'_{r,i} & \cup \{q\in \calP'\mid \forall s
    (r<s\le q.y \rightarrow \exists j\, A'_{s,j}.x=A'_{r,i}.x)\}\\
 & \cup \{q\in \calP'\mid \forall s
    (r>s\ge q.y \rightarrow \exists j\, A'_{s,j}.x=A'_{r,i}.x)\}
  \end{split}
\]

For each $i\le k_r$ we define the set $A_{r,i}$ as
\[
A_{r,i}=\{(q.l,d(r,q)) \mid q\in A''_{r,i}\}
\]
and, finally, we let
\[
\Pro(r)=A_{r,1},\ldots,A_{r,k_r}
\]

For every $i\le k_r$, we call $A_{r,i}'.x$ the \textit{$x$-value of} position $i$ in $\Pro(r)$.

It is easy to see that $\Pro(r)$ is indeed a profile as defined
above. We refer to Figure \ref{fig:lpp1} for an example profile extraction. 

Let $y_{\max}$ be the maximal $y$-value of a point in $\calP'$ and
$y_{\min}$ be the minimal $y$-value of a point in $\calP'$.  
    The \textit{profile sequence $\Seq(\calP)$} of $\calP$ is the
    sequence $\Pro(y_{\min})\Pro(y_{\min}+1)\ldots\Pro(y_{\max})$. 

  \newcommand{\addpoint}[5]{
    \cnode*(#1,#2){\pointsize}{1}
    \nput[labelsep=0.5mm]{315}{1}{\scriptsize #3}
    \psline[linestyle=dotted]{-}(#1,#4)(#1, #5)
  }
  \begin{figure}[t] 
\centering
	\psset{xunit=0.5cm}
	\psset{yunit=0.3cm}
      \begin{pspicture}(22,10.5)
	\newdimen\pointsize
	\pointsize=0.4mm
	\newcount\ul 
	\ul=8 
	\newcount\ll 
	\ll=4
	

	\psline{-}(1,\ul)(22,\ul)


        \rput(0,\ul){$r$}

	\addpoint{7}{\ll}{a}{\ll}{\ll}
	\addpoint{8}{\ul}{a}{\ul}{\ul}

	\addpoint{5.8}{\ul}{c}{\ul}{\ul}
	\addpoint{14}{\ul}{b}{\ul}{\ul}
	\addpoint{4.8}{\ll}{c}{\ll}{\ll}
	\addpoint{15}{\ll}{b}{\ul}{\ll}

\psline[linestyle=dashed,linewidth=0.5pt](1,7)(22,7)
\psline[linestyle=dashed,linewidth=0.5pt](1,9)(22,9)

	\addpoint{3}{1}{c}{1}{1}
	\addpoint{4}{0}{a}{0}{\ul}
	\addpoint{13}{3}{c}{3}{\ul}
	\addpoint{16}{2}{a}{2}{2}
	\addpoint{16}{1}{b}{1}{\ul}
	\addpoint{18}{3}{d}{3}{\ll}
	\addpoint{19}{1}{a}{1}{\ul}
%
	\addpoint{2}{5}{c}{5}{\ul}
	\addpoint{4}{7}{c}{7}{7}
	\addpoint{12}{6}{c}{6}{6}
	\addpoint{17}{5}{a}{\ll}{\ll}
%
	\addpoint{1}{11}{c}{11}{\ul}
	\addpoint{6}{9}{a}{9}{\ul}
	\addpoint{9}{9}{b}{9}{\ul}
	\addpoint{10}{10}{a}{10}{\ul}
	\addpoint{18}{10}{b}{10}{\ll}
	\addpoint{20}{11}{a}{11}{\ul}

      \end{pspicture} 
      \caption{Line with profile $(c,
        \pup)(c, \pdown)\{(a, \pdown), (c, (\pdown,\S))\}(c, \pe)(a, (\pup,S))(a, \pe)(b,
        (\pup,\S))$ $(a, \pup) (c, \pdown)(b, \pe)(b, \pdown)(b, \pdown)\{(b,\pup),(d,\pdown)\}(a,
        \pdown)(a, \pup)$. Dotted vertical lines indicate points contributing
        to the profiles. The dashed horizontal lines indicate the
        predecessor and successor line, respectively. Singleton sets
        in profiles are written without braces.  Profile constraints $(\pup,\neg S)$
        and $(\pdown,\neg S)$ are abbreviated as $\pup$ and $\pdown$,
        respectively. Thus, e.g., the first pair $(c,\pup)$
        abbreviates the full notation $\{(c,(\pup,\neg S))\}$. 
\label{fig:lpp1}}
    \end{figure}

  \subsubsection*{Valid Profiles.} Next we define some necessary
  conditions that a profile of a horizontal line in a solution of an
  \LPP-instance $L$ must fulfill. 

  Let $A = A_1 \ldots A_k$ be a profile. We say that a pair
  $(\sigma, \pe)\in A_i$ {\em fulfills a directional constraint
    $(\circ_d, s_d) \in \calD \times \{\S, \nS\}$ with respect to a pair $(\tau,c)\in A_j$},
  if the following conditions are
  satisfied, where $c=(\circ_c, s_c)$ or $c=\pe$:

  \begin{iteMize}{$\bullet$}
   \item  (Horizontal Conditions) 	
      \begin{iteMize}{$-$}
	\item If $\circ_d \in \{\dnw, \dw, \dsw\}$ then $i > j$.
	\item If $\circ_d \in \{\dne, \de, \dse\}$ then $i < j$.
	\item If $\circ_d \in \{\dw, \de\}$ then $c = \pe$.
	\item  If $\circ_d \not\in \{\dw, \de\}$ then $c\not=\pe$ and $s_d = s_c$.
      \end{iteMize}
    \item  (Vertical Conditions) 	
      \begin{iteMize}{$-$}
	\item If $\circ_d \in \{\dnw, \dn, \dne\}$ then $\circ_c =  \pup$. 
	\item If $\circ_d \in \{\dsw, \ds, \dse\}$ then $\circ_c = \pdown$.
	\item If $\circ_d \in \{\dn, \ds\}$ then $i = j$.
     \end{iteMize}
  \end{iteMize}
 A pair
  $(\sigma, \pe)\in A_i$ {\em fulfills an inequality constraint
    $\neq$ with respect to a pair $(\tau,c)\in A_j$},
   if $i\not=j$ or $c\not=\pe$.

  A profile $A = A_1 \ldots A_k$ is \textit{valid} with respect to an
  existential constraint $(\sigma, E)$ if, for every $i$ and every
  $(\sigma, \pe)\in A_i$, there is some $(\tau,d)
  \in E$ and some $(\tau, c)$ in some $A_j$ such that  $(\sigma, \pe)$
  fulfills  $d$ with respect to $(\tau, c)$.

  The profile $A$ is \textit{valid} with respect to a universal
  constraint $(\sigma, \tau,d)$ if there {\em do not exist any} $i,j$ and pairs $(\sigma, \pe)\in
  A_i$  and $(\tau, c)\in A_j$ such that $(\sigma, \pe)$
  fulfills  $d$ with respect to $(\tau, c)$.

A profile $A$ is \textit{$L$-valid}  if it is valid with respect to
all (existential and universal) constraints from $L$.

\begin{example}
   Let $L$ be  an $\LPP$ instance and $A = A_1 \ldots A_k$  a
   profile. If $L$ contains an $\exists$-constraint
   $(\sigma,\{(\tau_1, (\dnw, \nS)),(\tau_2, \not=),(\tau_3, (\dse,
   \S))\})$ then for any $(\sigma,\pe)$ in $A_i$ there has to be
   \begin{iteMize}{$\bullet$}
   \item an
   occurrence $(\tau_1,(\pup, \nS))$ in $A_j$ for some $j<i$,
 \item an   occurrence of a pair $(\tau_2, c)$ in $A_j$ with $c \neq
   \cdot$ if $j = i$, or
 \item an  occurrence $(\tau_3,(\pdown, \S))$ in $A_{l}$ for some $l>i$. 
   \end{iteMize}
If $L$ contains a $\forall$-constraint $(\sigma, \tau, (\dsw, \S))$
then there should be no $i>j$ with $(\sigma, \pe)\in A_i$ and $(\tau,
(\pdown, \S)) \in A_j$.  
\end{example}
 
%

   \subsubsection*{Consistent Pairs of Profiles.} 
     Let $A = A_1 \ldots A_k$ and $B = B_1 \ldots B_m$ be profiles. The pair $(A, B)$ is said to be \textit{consistent}, if there is a
     function
 \[
s:\big(\{a\}\times\{1,\ldots,k\}\cup\{b\}\times\{1,\ldots,m\}\big)\to\rat
\]
     that is strictly monotone in its second parameter and
fulfills the following conditions (C1) and (C2) for every $\sigma \in
\Sigma$.
\begin{enumerate}[(C1)]
  \item
    \begin{enumerate}[(a)]
    \item If for some $i$, $A_i$ contains $(\sigma, (\pup, \S))$
    then there is some $j$, such that $B_j$
    contains  $(\sigma, \pe)$ and $s(a,i) = s(b,j)$.
  \item If for some $j$,  $B_j$   contains  $(\sigma, \pe)$ then there is
    some $i$ such that $A_i$ contains
    $(\sigma, (\pup, \S))$ and $s(a,i) = s(b,j)$.
    \end{enumerate}
 \item
   \begin{enumerate}[(a)]
   \item If for some $i$,  $A_i$ contains $(\sigma, (\pup, \nS))$, there
   is some $j$ such that $B_j$    contains $(\sigma,  (\pup, \S))$  or $(\sigma,  (\pup, \nS))$ and
    $s(a,i) = s(b,j)$.
  \item  If for some $j$,  $B_j$ contains $(\sigma,
    (\pup, \S))$  or $(\sigma,  (\pup, \nS))$ and $j$ is minimal or
    maximal with this property then  there is some $i$ with $s(a,i) = s(b,j)$ such
    that $A_i$ contains $(\sigma, (\pup, \nS))$.
  \item If for some $j$,  $B_j$ contains $(\sigma,
    (\pup, \S))$  or $(\sigma,  (\pup, \nS))$ and there is some $i$
    with $s(a,i) = s(b,j)$ then $(\sigma, (\pup, \nS))\in A_i$.
   \end{enumerate}
\end{enumerate}
 The analogous conditions hold with respect to
  occurrences of $(\sigma,(\pdown,\S))$ and $(\sigma,(\pdown,\nS))$
  \mbox{in $B$}.

 Intuitively, $s$ ``synchronizes'' points in $A$ with points
in $B$ by mapping them to the same ``$x$-coordinate''.  
The conditions (C1) ensure that successive roles fit together with
respect to profile constraints of the form $(\dn,S)$ and
$(\ds,S)$. More precisely, if $(\sigma,(\dn,S))$ occurs in $A$ then $B$
should have a $\sigma$-point at the same vertical position. 
The conditions of (C2) ensure consistency for profile constraints of the form $(\dn,\nS)$ and
$(\ds,\nS)$. (C2a) ensures that the requirement that there is a
$\sigma$-point above a certain position of $A$ is reflected in
$B$. (C2b) ensures that $A$ contains the full information about
minimal and maximal points above $B$. Finally, (C2c) guarantees that
if a point $u$ for $A$ is synchronized with a point $v$ for $B$ then
its profile
carries the full information about the points above $v$. 

If $s(a,i)=s(b,j)$ we say that position $i$ of $A$ and position $j$ of
$B$ are \textit{synchronized}.  We additionally require for $s$ that
unless requested by (C1) or (C2) (or their downward counterparts) $s(a,i)\not=s(b,j)$ holds for
every $i$ and $j$. This requirement
guarantees that, even though $s$ is not uniquely determined, the set
of pairs of synchronized positions is unique. 


A sequence $\calS = P_1 \ldots P_N$  of profiles is called \textit{consistent}, if $P_1$ is a bottom profile, $P_N$ is a top profile and $(P_i, P_{i+1})$ is consistent for every $i \in \{1, \ldots, N-1\}$.

If $\calS = P_1 \ldots P_N$ is a consistent sequence of profiles, we say that a position $k$ of $P_i$ is \textit{connected to} position
  $\ell$ of $P_j$, where $j>i$ if
  \begin{enumerate}
  \item $j=i+1$ and the two positions are synchronized or
  \item position $k$ in $P_i$ is synchronized with some position in $P_{i+1}$
    which in turn is connected to position $\ell$ in $P_j$.
  \end{enumerate}

 The following proposition characterizes the \LPP-instances with
 finite solutions in terms of existence of consistent sequences of
 valid profiles and is key to our algorithm.

  \begin{proposition}\label{prop:modelcharacterisation}
     Let $L$ be an instance of \LPP. Then $L$ has a finite
     solution if and only if there is a finite consistent sequence of $L$-valid
     profiles. 
  \end{proposition}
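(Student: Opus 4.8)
The plan is to prove the two directions of the biconditional separately, establishing that the profile abstraction faithfully captures the existence of a finite solution.

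\medskip

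\noindent\textbf{From a solution to a consistent sequence of valid profiles.}
First I would assume that $L$ has a finite solution $\calP \subseteq \nat^2\times\Sigma$ and construct the desired sequence directly as the profile sequence $\Seq(\calP)=\Pro(y_{\min})\ldots\Pro(y_{\max})$ defined above. The main task here is to verify two things. \emph{Validity:} for each row $r$, every constraint of $L$ that must hold for points on $r$ is reflected in $\Pro(r)$. The key observation is that $\calP'$ retains, for every label and every row, the leftmost and rightmost witness, which is exactly the extremal information needed to check both the $\forall$-constraints (forbidden patterns, monotone in the extremes) and the $\exists$-constraints (a witness, if it exists at all, can be taken to be extremal in the relevant direction). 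I would argue that for every pair $(p,q)$ of points of $\calP$, the corresponding pairs of profile entries reproduce the truth value of every position constraint, so $L$-validity of each $\Pro(r)$ follows from $\calP$ being a solution. \emph{Consistency:} for each adjacent pair $(\Pro(r),\Pro(r+1))$ I would define the synchronization function $s$ by mapping a position to (a rational encoding of) the common $x$-coordinate of the points it represents, and then check conditions (C1) and (C2). (C1) expresses that an extremal point in $\Pro(r)$ flagged with $(\pup,\S)$ must actually sit on row $r+1$, which holds by construction of $\Pro(r+1)$; (C2) expresses that the ``strictly above'' information is correctly propagated, which is guaranteed precisely by the auxiliary sets $A''_{r,i}$, whose definition was tailored to carry forward exactly those columns that remain extremal for all rows up to (resp.\ down to) a given point. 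Monotonicity of $s$ follows from ordering positions by $x$-coordinate, and $P_1=\Pro(y_{\min})$ is a bottom profile and $P_N=\Pro(y_{\max})$ a top profile since no points lie below $y_{\min}$ or above $y_{\max}$.

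\medskip

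\noindent\textbf{From a consistent sequence of valid profiles to a solution.}
For the converse I would take a finite consistent sequence $\calS=P_1\ldots P_N$ of $L$-valid profiles and reconstruct an actual point set. The strategy is to use the chain of synchronization functions to assign a consistent rational $x$-coordinate to each position, merging across rows via the \emph{connected} relation defined above: positions connected through successive synchronizations receive the same $x$-value, and I would place a point with label $\sigma$ at $(x,\,i)$ whenever some $A_j$ at row $i$ contains a pair $(\sigma,\pe)$ at a position with assigned $x$-value $x$. (Rescaling the finitely many rationals to distinct naturals at the end yields a genuine element of $\nat^2\times\Sigma$.) The crux is then to verify that this $\calP$ satisfies all constraints of $L$. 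Here validity of each $P_i$ gives the ``same-row'' and ``witness'' parts directly, while the consistency conditions (C1)/(C2) guarantee that the profile entries with $\pup$/$\pdown$ markers genuinely correspond to points on strictly higher/lower rows with the correct successor status, so that the flattened profile information agrees with the true geometry of $\calP$. In particular (C2c) ensures that whenever two positions in adjacent profiles are synchronized, the upper profile records the \emph{complete} extremal information above, which is what lets a local, line-by-line check entail the global correctness of every $\exists$- and $\forall$-constraint.

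\medskip

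\noindent\textbf{Main obstacle.}
I expect the hard part to be the converse direction, and specifically the bookkeeping around the $(\pup,\nS)$ and $(\pdown,\nS)$ markers. For the ``$\S$''-markers the synchronization is rigid and local (C1), but a ``strictly above, not adjacent'' requirement must be matched against some point possibly many rows away, and one has to show that following the connected relation never loses or spuriously creates such a witness. The delicate point is proving that the extremal-witness discipline built into $\calP'$ and the auxiliary sets $A''_{r,i}$ is \emph{exactly} strong enough: strong enough that a non-extremal witness never matters (so $\forall$-constraints cannot be violated by forgotten interior points), yet complete enough that every required $\exists$-witness reachable somewhere above/below is captured by a profile entry that the synchronization chain can reach. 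Conditions (C2b) and (C2c) are designed precisely to close this gap, and the main effort will be an inductive argument along the sequence showing that ``connected'' positions always carry faithful above/below information.
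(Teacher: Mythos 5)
Your proposal is correct and follows essentially the same route as the paper: the forward direction verifies that $\Seq(\calP)$ is a consistent sequence of $L$-valid profiles (using extremal witnesses and the $x$-coordinate synchronization), and the converse rebuilds a point set by propagating rational $x$-values along synchronized positions and then proves, by induction on the sequence, that the $\pup/\pdown$ markers faithfully describe the geometry --- which is exactly the paper's Claim about $\calP$. The ``main obstacle'' you single out (faithfulness of the $(\pup,\nS)$/$(\pdown,\nS)$ information along connected positions, closed by conditions (C2b)/(C2c)) is precisely what that claim and its inductive proof handle.
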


  \begin{proof}
    Let $\calP$ be a model of $L$. We claim that $\Seq(\calP) = P_1
    \ldots P_N$ is a consistent sequence of $L$-valid profiles. Let
    $\calP'$ be defined as in the definition of $\Seq(\calP)$.
To this
    end, let $k\le N$ be arbitrarily chosen such that
    $P=P_k=A_1,\ldots,A_\ell$, for some $\ell$.  We show that $P$ is
    valid with respect to all existential and universal constraints
    from $L$. 

Let first $(\sigma,E)$ be an $\exists$-constraint from $L$ and let
      $(\sigma,\pe)\in A_i$, for some $i\le \ell$. Let $r$ be a
      horizontal line of $\calP$ that corresponds to $P_k$ and let $p$
      be the point on $r$ that corresponds to  $(\sigma,\pe)\in
      A_i$. As $\calP$ is a solution for $L$ there is some
      $(\tau,d)\in E$ and 
a point  $q\in\calP$ with label $\tau$ such that $(p,q)$
      satisfies $d$. We first consider the case that $d$ is a
      directional constraint  $(\myd,s_d)$.
      \begin{iteMize}{$\bullet$}
      \item If $\myd=\dne$ then $q$ is to the northeast of $p$, in
        particular, $p.y<q.y$ and $p.x<q.x$. 
Let $q'$ be
        $q(\tau,(\dn,s_d),\max,r)$. By definition of $q(\tau,(\dn,s_d),\max,r)$, it
        holds $p.x<q.x \leq q'.x$
and  $p.y+1 < q'.y$  in case $s_d = \nS$ (in case, $s_d=\S$, we get
  $p.y+1 = q'.y$ instead). Therefore,
        $(\tau,(\dn,s_d))$ occurs in $A_j$, for some $j>i$. Clearly,
        $(\sigma,\pe)$ and  $(\tau,(\dn,s_d))$ are consistent with
        respect to $d$. Analogous arguments can be applied if
        $\myd\in\{\dnw,\dsw,\dse\}$. 
      \item  If $\myd=\de$ then $q$ is to the right (east) of $p$, in
        particular, it holds $p.y=q.y$ and $p.x<q.x$. Without loss of
        generality, $q=q(\tau,\pe,\max,r)$. By definition of
        $q(\tau,\pe,\max,r)$, it 
        holds $p.y=q.y$ and $p.x<q.x$. Therefore,
        $(\tau,\pe)$ occurs in $A_j$, for some $j>i$. Clearly,
        $(\sigma,\pe)$ and  $(\tau,\pe)$ are consistent with
        respect to $d$. An analogous argument can be applied if
        $\myd=\dw$.
      \end{iteMize}
If $d$ is an inequality constraint, then $q\not=p$ but the relative
position of $p$ and $q$ can be arbitrary. If $p.y<q.y$ we can argue as
above with $q'=q(\tau,(\dn,\nS),\max,r)$ or
$q'=q(\tau,(\dn,\S),\max,r)$ (if the former does not exist). Note
however, that the corresponding pair
$(\tau,(\dn,\S))$ or $(\tau,(\dn,\nS))$ could be in $A_i$. However, as
the second component of the pair is not $\pe$ this is sufficient. If
$p.y>q.y$ we can argue analogously. If $p.y=q.y$ and $p.x<q.x$ we can
argue as above in the case $\myd=\de$. Likewise,  we can
argue as above in the case $\myd=\dw$
if $p.y=q.y$ and $p.x>q.x$. 
Altogether we can conclude that $P$ is valid with respect to $(\sigma,E)$.

Let now $(\sigma,\tau,d)$ be a $\forall$-constraint from $L$ and let $(\sigma, \pe)\in
  A_i$,   for some $i$, $p=q(\sigma,\pe,m,r)$ for some $m\in\{\min,\max\}$ the
  corresponding point and $(\tau, c)\in A_j$ for some $j$. 
Towards a contradiction let us now assume that $(\sigma, \pe)\in
  A_i$  and $(\tau, c)\in A_j$ are consistent with respect to
  $d=(\myd,s_d)$. We distinguish cases based on $\myd$.
  \begin{iteMize}{$\bullet$}
  \item We first consider the case $\myd=\dne$. Thus, we can conclude that $i<j$,
    $c\not=\pe$. Hence  $c=(\myc,s_c)$ for some $\myc,s_c$ and we can
    further conclude $s_d=s_c$ and $\myc=\dn$. Let $q\in\calP'$ be the
    point corresponding to   $(\tau, c)\in A_j$. Clearly, $q.l=\tau$
    and $(p,q)$ satisfy $d$ and therefore $\calP$ is not
    a solution for $L$, the desired contradiction. For
    $\myd\in\{\dnw,\dsw,\dse\}$ we can argue analogously.
  \item We next consider the case $\myd=\de$. Thus, we can conclude
    that $i<j$ and $c=\pe$.  But
    then it is easy to see that $p\de q$, where $q$ is the point
    corresponding to $(\tau,\pe)$ in $A_j$ and therefore $\calP$ is not
    a solution for $L$, the desired contradiction. For
    $\myd=\dw$ we can argue analogously.
  \item Finally, we consider the case $\myd=\dn$. Thus, we can
    conclude  $c\not=\pe$ and therefore  $c=(\myc,s_c)$ for some
    $\myc,s_c$. Furthermore, $\myc=\dn$, $s_c=s_d$ and $i=j$.  But
    then it is easy to see that $p\dn q$, where $q$ is the point
    corresponding to $(\tau,(\dn,s_c))$ in $A_j$ and therefore $\calP$ is not
    a solution for $L$, the desired contradiction. For
    $\myd=\ds$ we can argue analogously.
  \end{iteMize}
Altogether we can conclude that $P$ is valid with respect to
$(\sigma,\tau,d)$ and therefore $P$ and all profiles in $\Seq(\calP)$ are
$L$-valid. \\

We continue by showing that $\Seq(\calP)$ is consistent. Clearly,
$P_1$ is a bottom profile and $P_N$ is a top profile as there are no
points in $\calP'$ below the row corresponding to $P_1$ or above the
row  corresponding to $P_N$. It thus remains to be shown that, for
every $k<N$, $P_k$ and $P_{k+1}$ are consistent. 

  Let $P_k = A_1 \ldots A_\ell$ and $P_{k+1} = B_1 \ldots B_m$ be
  profiles and let $r$ and $r+1$ be the horizontal rows of $P_k$ and
  $P_{k+1}$, respectively. 
 
We define the synchronizing function 
\[
s:\big(\{a\}\times\{1,\ldots,\ell\}\cup\{b\}\times\{1,\ldots,m\}\big)\to\rat
\]
as follows. For $u\le \ell$, we let $s(a,u)$ be the $x$-value of
$A_u$ and for $v\le m$, we let $s(b,v)$ be the $x$-value of
$B_v$. We need to show conditions (C1) and (C2). 

Towards (C1a), let $(\sigma,(\dn,\S))\in A_i$, for some $i$. 
By construction of $P_k$, there is a $\sigma$-labeled point
$q\in\calP'$ with $q.y=r+1$ and $q.x$ is the $x$-value of position $i$
in $P_k$. Therefore, there is a $j$ such that $(\sigma,\pe)\in B_j$
and $s(a,i)=s(b,j)$. Towards (C1b), let  $(\sigma,\pe)\in B_j$ and let
$q\in\calP'$ be its corresponding point. As there are at most two
$\sigma$-labeled points on $r+1$, $q=(\sigma,(\dn,\S),m,r)$, for some
$m\in\{\min,\max\}$. Therefore, by definition of $\Pro(r)$, there is
$i$ such that  $(\sigma,(\dn,\S))\in A_i$ and $s(a,i)=s(b,j)$. 
Analogously, for $(\sigma,(\ds,\S))\in A_i$.

Towards (C2a),  let $(\sigma,(\dn,\nS))\in A_i$, for some $i$. Let
$q=q(\sigma,(\dn,\nS),m,r)$ be the corresponding point in $\calP'$, for some
$m\in\{\min,\max\}$. By definition of $\Pro(r)$, either
$q=q(\sigma,(\dn,\nS),m,r+1)$ or $q=q(\sigma,(\dn,\S),m,r+1)$. In either
case (C2a) holds, by definition of  $\Pro(r+1)$.

In order to show (C2b), let $j$ be minimal such that $B_j$ contains
$(\sigma,(\dn,\S))$ or $(\sigma,(\dn,\nS))$. Therefore,
the point $q$ corresponding to $B_j$ is a leftmost point above $r+1$ with label
$\sigma$. Thus, $q=q(\sigma,(\dn,\nS),\min,r)$ and there is some $i$
with $s(a,i)=s(b,j)$ and $q$ induces a pair $(\sigma,(\dn,\nS))$ in $A_i$. 

Finally, (C2c) is guaranteed by the definition of $\Pro(r)$. 

Altogether, we can conclude that $\Seq(\calP)$ is consistent.

This concludes the first part of the proof that is, if $L$ has a finite
     solution then there is a finite consistent sequence of $L$-valid
     profiles.

    Now, let  $\calS = P_1 \ldots P_N$ be a consistent sequence of $L$-valid
    profiles. There are sets $A_{k,i}$ and numbers $m_k$ such that for
    each $k\le 1$, $P_k=A_{k,1},\ldots,A_{k,m_k}$. 

We construct a finite set $\calP$  of labeled points with
    $\Seq(\calP) = \calS$ and prove that $\calP$ is a model
    of $L$. Intuitively, the idea is that for every occurrence of $(\sigma, \pe)$
    in profile $P_i$ the $i$th line of $\calP$ contains a point $p$
    labeled with $\sigma$. That is, the points in $\calP$ only have
    $y$-values from $\{1,\ldots,N\}$. During the construction of
    $\calP$ we will use rational, possibly non-integer values for
    $x$-coordinates. We note that if there is a finite
    solution for $L$ with rational $x$-values, then there is also a finite
   solution for $L$ with natural $x$-values which can be obtained by
   replacing, for every $k$, the $k$-th smallest $x$-value by
   the number $k$. Clearly, this transformation does not affect
   the successor relation on the $y$-values.

More formally, we assign
    to every set of a profile a point in $\rat\times\nat$. To this
    end, we define, for every $k\in\{1,\ldots,n\}$,
 a function $\pi_k:\{1,\ldots,m_k\}\to\rat$
    with the interpretation that the
    point $(\pi_k(i),k)$ is assigned to the set $A_{k,i}$. Whenever a set $A_{k,i}$ contains a pair $(\sigma,
\pe)$ we add a point $p$  to $\calP$ with $p.y=k$, $p.x=\pi_k(i)$ and
$p.l=\sigma$. Therefore, $\calP$ is defined as soon as the functions
$\pi_k$ are defined.

The function $\pi_1$ for the first row is defined by $\pi_1(i)=i$, for
every $i\le m_1$. Let us now assume that the function
$\pi_1,\ldots,\pi_{k-1}$ are already defined. Since $(P_{k-1},P_k)$ is
consistent there is a synchronizing function $s$ for them.  For every
pair $(i,j)$ such that $A_{k-1,i}$ and $A_{k,j}$ are synchronized by
$s$ we define $\pi_k(j)=\pi_{k-1}(i)$. 
Let $j_0,j_1$ be such that $\pi_k(j_0)$ and $\pi_k(j_1)$ are already
defined by the previous step, and  $\pi_k(j)$ is not defined for every $j$ with $j_0 < j < j_1$. We define $\pi_k(j)$ for all $j$,
$j_0<j<j_1$ by picking values such that
\begin{iteMize}{$\bullet$}
\item $\pi_k(j_0)<\pi_k(j)<\pi_k(j')<\pi_k(j_1)$, for every pair $j,j'$ with
  $j_0<j<j'<j_1$, and
\item for every $j$, $j_0<j<j_1$, $\pi_k(j)$ is not in the range of
  $\pi_\ell$ for any $\ell<k$.
\end{iteMize}
As the interval $(\pi_k(j_0),\pi_k(j_1))$ contains infinitely many
points from $\rat$ such values can be found. We note that this
  choice of $x$-values ensures that two points can be on the same
  vertical only if they are connected.
This concludes the construction of $\calP$.

It remains to prove that $\calP$ is indeed a solution for $L$. 

To this end, we first prove the following claim.
\begin{myclaim}\label{claim:calP}
  Let $k\le N$ and $i\le m_k$.
  \begin{enumerate}[(a)]
  \item There is a $\tau$-labeled point
  $q\in\calP$ with $q.x=\pi_k(i)$, $q.y>k+1$ and $q.l=\tau$ if and only
  if $(\tau,(\dn,\nS)\in
  A_{k,i}$.
  \item There is a $\tau$-labeled point
  $q\in\calP$ with $q.x=\pi_k(i)$, $q.y=k+1$ and $q.l=\tau$ if and only
  if $(\tau,(\dn,\S)\in A_{k,i}$.
  \end{enumerate}
Furthermore, the
  corresponding statement with $\ds$ in place of $\dn$ hold as well.
\end{myclaim}
\begin{proofof}{Claim \ref{claim:calP}}
  Let us first assume that  $(\tau,(\dn,\nS))\in A_{k,i}$. By consistency of the profiles and
  induction it follows that for some $\ell > k+1$ and $j\le m_\ell$,
  $(\tau,\pe)\in A_{\ell,j}$ and position $j$ of $P_\ell$ is connected
  to position $i$ of $P_k$. By definition of $\pi$ and induction we can
  conclude that $\pi_k(i)=\pi_\ell(j)$ and therefore, by construction
  of $\calP$ from $\pi$, such a point $q$ indeed exists. If
  $(\tau,(\dn,\S))\in A_{k,i}$ an analogous argument yields
  $(\tau,\pe)\in A_{k+1,j}$ and again a point $q$ with the desired
  properties. Thus, we can conclude that the ``if''-statements in (a)
  and (b) hold. The  corresponding conclusion for $\ds$ in place of $\dn$ naturally also holds.

Let us now assume that $q$ is as in the statement of the claim. By
construction of $\calP$ there must be $\ell$ and $j$ such that
$(\tau,\pe)\in A_{\ell,j}$, $\ell>k+1$, and $\pi_\ell(j)=\pi_k(i)$. By
definition of $\pi$ and induction it follows that position $j$ of
$P_\ell$ and position $i$ of $P_k$ are connected. By consistency
condition (C1) and the definition of $\pi$ we can conclude that $(\tau,(\dn,\S))\in
A_{\ell-1,j'}$, for some $j'\le m_{\ell-1}$ with
$\pi_{\ell-1}(j')=\pi_\ell(j)$. This yields the ``only if''-part of (b).
By consistency condition (C2), definition of $\pi$ and induction we can conclude that  for every
$u<\ell-1$ there is some $v\le m_u$ such that $(\tau,(\dn,\nS))\in
A_{u,v}$ and $\pi_u(v)=\pi_\ell(j)$. Thus, in particular, either
$k=\ell-1$ and $(\tau,(\dn,\S))\in
A_{k,i}$ or $k<\ell-1$ and $(\tau,(\dn,\nS))\in
A_{k,i}$. This yields the ``only if''-part of (a). The
  corresponding conclusion for $\ds$ in place of $\dn$ holds
  again. This completes the proof of Claim \ref{claim:calP}. 
\end{proofof}

Now we can show that $\calP$ is a solution for $L$. 

Let first
$(\sigma,E)$ be an $\exists$-constraint from $L$ and $p\in\calP$ with $p.l = \sigma$. Let
$k$ be the number of the corresponding profile and $i$ the position of
the corresponding set in $P_k$. By construction of $\calP$,
$(\sigma,\pe)\in A_{k,i}$. As $\calS$ is $L$-valid, there is some
$(\tau,d)\in E$, some $j\le m_k$ and some $(\tau,c)\in A_{k,j}$ such
that $(\sigma,\pe)$ and $(\tau,c)$ are consistent with respect to
$d$.
\begin{iteMize}{$\bullet$}
\item Let us first assume that $d=(\dne,s_d)$, for some $s_d$. 
From statements (a) and (b) of  Claim \ref{claim:calP} it follows that
there exists a $\tau$-labeled point $q$ corresponding to $(\tau,c)$ in
$ A_{k,j}$ such that $(p,q)$ satisfies $d$. For directions
$\dnw,\dsw,\dse$ we can argue analogously.
\item Let us now assume that $d=(\de,\nS)$. By validity, $c=\pe$ and
  thus there is a $\tau$-labeled point $q$ with $q.y=k$ and
  $q.x=\pi_k(j)>p.x$. Likewise, for $\dw$.
\item Let finally $d=\not=$. From validity it follows that $i\not=j$
  or $c\not=\pe$. 
Similarly as in the two previous cases,
  we can conclude that there is a 
  $\tau$-labeled point $q$ in the row of $p$ or somewhere else,
  possibly on the same vertical line as $p$ but different from $p$. 
\end{iteMize}

Let now $(\sigma,\tau,d)$ be a $\forall$-constraint. For the sake of a
contradiction, let us assume that there are $p$ and $q$ that do
\textit{not} satisfy this constraint. By definition this means that
$p.l=\sigma$, $q.l=\tau$ and $(p,q)$ satisfies $d$. Let $A_{k,i}$ and
$A_{\ell,j}$ be the sets from which $p$ and $q$ were obtained. 
\begin{iteMize}{$\bullet$}
\item Let us first assume that $d=(\dne,\S)$. Thus, $\ell=k+1$. As
  $(\tau,\pe)\in A_{\ell,j}$ and $q.x>p.x$, (C1) implies that $(\tau,(\dn,\S))\in
  A_{k,j'}$ for some $j'>i$ contradicting validity. An analogous
  argument applies to the directions $\dnw, \dsw,\dse$ in place of $\dne$.
\item  Let us next assume that $d=(\dne,\nS)$. Thus,
  $\ell>k+1$. Using consistency it can be shown by induction that, for
  every $u< \ell-1$ there is some $v$ such that $(\tau,(\dn,\nS))\in
  A_{u,v}$ and $\pi_u(v)\ge \pi_\ell(j)$. In particular, there is some
  $j'$ such that $(\tau,(\dn,\nS))\in
  A_{k,j'}$ and $\pi_k(j')\ge \pi_\ell(j)>\pi_k(i)$, contradicting validity.
An analogous
  argument applies to the directions $\dnw, \dsw,\dse$ in place of $\dne$.
\item Let us next assume that $d=(\de,\nS)$. Thus, $\ell=k$ and
  $(\tau,\pe)\in A_{k,j}$, an immediate contradiction to
  validity. Likewise for $\dw$ in place of $\de$.
\item Let us finally assume that $d=(\dn,s_d)$, for some $s_d$. By
  Claim \ref{claim:calP} it follows that  $(\tau,(\dn,s_d))\in
  A_{k,i}$, an immediate contradiction to validity.
\end{iteMize}
As all possible cases yield a contradiction we can conclude that
violations of $\forall$-constraints do not occur. Altogether we have
shown that $\calP$ is indeed a solution for $L$.
  \end{proof}

It should be noted that the proof implicitly shows that if an
\LPP-problem has a finite solution then it has a solution with at most
$2|\Sigma|$ points per horizontal line.

%

Now we are ready to complete the proof of Proposition \ref{prop:lpp}.

\begin{proofof}{Proposition \ref{prop:lpp}}
Proposition \ref{prop:modelcharacterisation} allows for testing
satisfiability of a labeled point problem $L$ by checking whether
there is a consistent sequence of $L$-valid profiles. We note that there
is only an exponential number $M$ of profiles, thus if there is a such
sequence of profiles, there is also a sequence of length $\leq M$.  

Whether such a sequence exists can be tested by a non-deterministic
algorithm with polynomial space. The algorithm guesses a sequence
$P_1,\ldots,P_N$ of profiles, for some $N\le M$. It checks that every
profile is $L$-valid and that the sequence is consistent. To this end,
it only needs to store two profiles $P_k,P_{k+1}$ at any time.
Clearly, $L$-validity of a given profile can be tested in polynomial
time, likewise consistency of two profiles $P_k,P_{k+1}$ and whether
$P_1$ is a bottom profile and $P_N$ a top profile.

By Savitch's Theorem the problem can therefore be solved in polynomial space.
\end{proofof}

The proof of Theorem \ref{theo:twodcompl} can now be given by a simple
combination of the results in this section.

\begin{proofof}{the upper bound of Theorem \ref{theo:twodcompl}}
   Let $\varphi$ be a $\FO^2(\leq_1, [\precsim_2, \N_2])$-sentence. By
   Proposition \ref{prop:topoints}, a semi-positive 
  $\FO^2(\calD_-, I_y)$-sentence $\varphi'$ that is equivalent to
  $\varphi$  with respect to finite satisfiability can be computed in
  polynomial time. From $\varphi'$ an equivalent exponential size
  labeled point problem $L$ can be obtained, by Proposition
  \ref{prop:tolpp}. By Proposition \ref{prop:lpp}, finite
  satisfiability of $L$ can be tested in polynomial space. Hence,
  testing finite satisfiability of a $\FO^2(\leq,_1,[\precsim_2,
  \N_2])$-sentence can be done in exponential space. 
\end{proofof}

Already in Section \ref{sec:preliminaries} we saw a strong connection between finite ordered structures and data words. Hence the following theorem follows straightforwardly from Theorem \ref{theo:twodcompl}.

\begin{theorem}
  Finite satisfiability for $\DW^2(\Sigma,\leq_1[\precsim_2, S_2])$ is in $\EXPSPACE$.
\end{theorem}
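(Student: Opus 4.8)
The plan is to reduce finite satisfiability over data words directly to finite satisfiability over ordered structures and then invoke Theorem~\ref{theo:twodcompl}. As observed in Section~\ref{sec:preliminaries}, the only difference between a data word over $\Sigma$ and a general ordered structure in $\FinOrd(\leq_1,[\precsim_2,S_2])$ is that in a data word the unary relations $\{\sigma\mid\sigma\in\Sigma\}$ partition the universe, that is, every position carries exactly one symbol. This partition requirement is expressible in $\FO^2$ using only the variable $x$, so I would encode it explicitly and hand the resulting sentence to the ordered-structure decision procedure.

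Concretely, given a $\DW^2(\Sigma,\leq_1[\precsim_2,S_2])$-sentence $\varphi$, first I would form the sentence
\[
\psi_\Sigma \;=\; \forall x\,\Big(\bigvee_{\sigma\in\Sigma}\sigma(x)\ \wedge \bigwedge_{\substack{\sigma,\tau\in\Sigma\\ \sigma\neq\tau}}\neg\bigl(\sigma(x)\wedge\tau(x)\bigr)\Big),
\]
which asserts that the symbol predicates are pairwise disjoint and jointly exhaustive. Its size is $O(|\Sigma|^2)$, hence polynomial in the input, and it lies in $\FO^2$. I would then regard the symbols of $\Sigma$ as ordinary unary predicates and consider the conjunction $\varphi\wedge\psi_\Sigma$ as an $\FO^2(\leq_1,[\precsim_2,S_2])$-sentence.

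The key step is the equivalence: $\varphi$ has a finite data-word model if and only if $\varphi\wedge\psi_\Sigma$ has a finite model in $\FinOrd(\leq_1,[\precsim_2,S_2])$. For the forward direction, any data word representing a model of $\varphi$ is, by definition, an ordered structure whose symbol predicates partition the universe, so it satisfies both $\varphi$ and $\psi_\Sigma$. For the backward direction, any finite ordered model $\calA$ of $\varphi\wedge\psi_\Sigma$ has, by $\psi_\Sigma$, each element in exactly one symbol class; hence $\calA$ is, up to isomorphism, a data word over $\Sigma$, and since $\calA\models\varphi$ it is a data-word model of $\varphi$.

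Finally I would invoke Theorem~\ref{theo:twodcompl}, by which finite satisfiability of $\FO^2(\leq_1,[\precsim_2,S_2])$ is in $\EXPSPACE$. Since $\varphi\wedge\psi_\Sigma$ is computed in polynomial time and is of polynomial size in $|\varphi|+|\Sigma|$, running the $\EXPSPACE$ procedure of Theorem~\ref{theo:twodcompl} on it decides finite satisfiability of $\varphi$ over data words within $\EXPSPACE$. I do not expect any genuine obstacle here: the whole argument is a syntactic padding by $\psi_\Sigma$ together with the observation that additional, unconstrained unary predicates in the ordered-structure setting are harmless, since the partition formula forces every such model into data-word shape. The only point requiring routine care is checking that $\psi_\Sigma$ is both expressible with a single variable and of polynomial size, both of which are immediate.
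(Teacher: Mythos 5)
Your proposal is correct and takes essentially the same route as the paper: the paper also obtains this theorem as an immediate consequence of Theorem~\ref{theo:twodcompl}, using the correspondence between data words and ordered structures noted in Section~\ref{sec:preliminaries}, and leaves the details implicit with the remark that the result ``follows straightforwardly.'' The partition axiom $\psi_\Sigma$ you add, together with the two-directional check of the reduction, is exactly the routine content the paper omits.
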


\section{Lower Bounds for Two-variable Logic on Ordered Structures}\label{sec:lower}

\newcommand{\ECT}{\ensuremath{\text{ExpCorridorTiling}}} 
\subsection{One Linear Order and one Total Preorder}
 \label{sec:hardness}
The following result shows that the upper bound of Theorem
\ref{theo:twodcompl} is sharp. We recall that $\FO^2(\leq_1,
  \precsim_2)$ is an abbreviation for $\FO^2(\FinOrd(\leq_1,
  \precsim_2))$.

\begin{theorem}\label{theo:expspacetwoorders}
  Finite satisfiability for $\FO^2(\leq_1, \precsim_2)$ is \EXPSPACE-hard.
\end{theorem}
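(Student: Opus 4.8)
The plan is to reduce from the \EXPSPACE-complete problem \ECT: given a tile set $T$ with horizontal and vertical compatibility relations $H,V\subseteq T\times T$, a succinctly described bottom row of width $2^n$, and a final tile $t_f$, decide whether a corridor of width $2^n$ and some finite height admits a consistent tiling that agrees with the given bottom row and contains $t_f$ in its top row. From such an instance I would construct, in polynomial time, an $\FO^2(\leq_1,\precsim_2)$-sentence $\varphi$ that is finitely satisfiable iff the corridor is tileable. The intended models \emph{are} the grids: the universe is the set of cells, the classes of $\precsim_2$ are the rows ordered bottom-to-top by $\prec_2$, and $\leq_1$ is the row-major reading order, which I force to be compatible with the rows by the conjunct $\forall x\forall y\,(x\prec_2 y\to x<_1 y)$. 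Each cell records its column index in $\{0,\dots,2^n-1\}$ through $n$ unary bit-predicates $b_1,\dots,b_n$, and its tile through $|T|$ further unary predicates.

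Most of the required properties are straightforwardly $\FO^2$-expressible, because all counter arithmetic lives inside the unary predicates. First I would enforce that each row is \emph{well-formed}, i.e.\ contains every counter value exactly once. Distinctness is the $\forall\forall$-statement $x\sim_2 y\wedge x\neq y\to c(x)\neq c(y)$, where $c(x)=c(y)$ abbreviates the Boolean identity of the bit-predicates; upward closure is the $\forall\exists$-statement ``if $c(x)\neq 2^n-1$ then $\exists y\,(x\sim_2 y\wedge c(y)=c(x)+1)$'', and ``$c(y)=c(x)+1$'' is a fixed quantifier-free Boolean relation between the bits of $x$ and of $y$; I additionally require that within a row $\leq_1$ increases the counter and that the $\leq_1$-least and $\leq_1$-greatest cell of each row carry counters $0$ and $2^n-1$. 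Together with the compatibility conjunct above, these force $\leq_1$ to be exactly the reading order. Horizontal compatibility is then the single $\forall\forall$-conjunct: whenever $x\sim_2 y$ and $c(y)=c(x)+1$, the tiles of $x,y$ lie in $H$. The boundary conditions are equally easy: ``$x$ lies in the bottom row'' is $\forall y\,(x\precsim_2 y)$ and ``$x$ lies in the top row'' is $\forall y\,(y\precsim_2 x)$, both $\FO^2$, so I can prescribe the bottom-row tiles by a Boolean condition on the counter and demand an occurrence of $t_f$ in the top row.

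The crux, and the step I expect to be the main obstacle, is vertical compatibility: a cell $u=(r,c)$ and the cell $v=(r+1,c)$ directly above it must carry tiles in $V$. Here $v$ is the same-column cell ($c(v)=c(u)$) in the \emph{immediately next} row, and this ``directly above'' relation is exactly the one not at our disposal: it is the successor of $\precsim_2$ on classes, equivalently the predicate $I_y$ of Section~\ref{sec:main}, and it is not $\FO^2$-definable from $\leq_1$ and $\precsim_2$ alone, since pinning down ``nothing strictly in between'' inherently needs a third variable. The idea I would pursue is to exploit the reading-order layout already enforced: among all cells of a fixed column, $\leq_1$ coincides with the bottom-to-top row order, so $v$ is the $\leq_1$-least same-column cell strictly $\leq_1$-above $u$. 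To capture this with two variables I would add auxiliary unary boundary markers derived from the counter wrap (the row-end cells, $c(x)=2^n-1$, and row-start cells, $c(x)=0$, are unary-definable) and use them, together with $\exists$-constraints, to assert vertical compatibility between $u$ and the first same-column cell above it that is separated from $u$ by exactly one boundary. The delicate heart of the argument is to show that this gadget genuinely isolates the intended covering relation within two variables and that it is robust in both directions, so that every model yields a legal tiling and every tiling a model; its correctness rests on well-formedness, which guarantees exactly $2^n$ cells per row and therefore that ``nearest same-column cell above'' can never skip a row. Once vertical compatibility is captured, soundness and completeness are routine, and \EXPSPACE-hardness of finite satisfiability of $\FO^2(\leq_1,\precsim_2)$ follows.
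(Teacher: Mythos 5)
There is a genuine gap, and it sits exactly where you flag it: the vertical-compatibility gadget is never actually constructed, and the idea you sketch cannot be made to work in $\FO^2$. Both ``$v$ is the $\leq_1$-least same-column cell above $u$'' and ``$u$ and $v$ are separated by exactly one boundary marker'' are statements about a third point lying strictly between two fixed points, and expressing them requires three variables: with only $x$ and $y$ you cannot hold $u$ and $v$ fixed while quantifying over the cells between them. Worse, your own layout forecloses the known repair. By imposing $\forall x\forall y\,(x\prec_2 y\rightarrow x<_1 y)$ you make $\leq_1$ refine the row order, so no cell of a higher row is ever $\leq_1$-below a cell of a lower row; the ``northwest'' region (smaller in $\leq_1$, larger in $\prec_2$) of every point is empty, and there is no geometric slack left between the two orders to encode adjacency of rows.

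The paper's proof avoids defining the covering relation altogether, and the device it uses is precisely the slack your encoding destroys: each tiling row $j$ is represented by \emph{two} horizontally staggered copies, an $R_-$ block and an $R_+$ block, interleaved in $\leq_1$ so that the $R_-$ block of row $j+1$ lies strictly between the $R_-$ and $R_+$ blocks of row $j$. Vertical constraints are then stated as a \emph{universal} condition -- for every $R_+$ point $p$ and every same-column $R_-$ point $q$ to the northwest of $p$ (i.e.\ $q<_1 p$ and $p\prec_2 q$), the tiles must lie in $V$ -- together with an \emph{existential} condition that some point with column number $n-1$ lies northwest of each $R_+$ point with column number $0$, which by the per-line axioms forces an entire line's $R_-$ part into the northwest region. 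In an arbitrary model this does not pin down the successor row, and it does not need to: the extraction argument simply follows, from the bottom line upward, \emph{any} line whose $R_-$ part is fully northwest (all such lines are $V$-compatible by the universal constraint), and finiteness of the model guarantees this chain reaches a line ending with $\omega$. If you want to salvage your write-up, you must abandon the single-copy row-major grid and adopt some such redundant, staggered encoding; as it stands, the ``delicate heart'' of your argument is not a delicate step but an inexpressible one.
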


\begin{myproof}
In the following, we use the notation $[i,j]$ to denote the set of
natural numbers between $i$ and $j$.

The proof of the theorem is by reduction from the $\EXPSPACE$-complete
problem \ECT. An input to $\ECT$ is a tuple
  $I=(T,H,V,\alpha,\omega,n)$, where
  \begin{iteMize}{$\bullet$}
  \item $T$ is the set of
  \textit{allowed tiles},
\item $V,H\subs T^2$ are the sets of horizontal
  and vertical constraints,
\item $\alpha,\omega\in T$ are tiles for the bottom row and the top row,
  respectively, and
\item $n$ is a natural number, given in binary.
  \end{iteMize}

  A \emph{valid tiling for $I$} is a mapping
  $\lambda:[1,n]\times[1,m]\to T$, for some $m\ge 2$ such that the following constraints  are satisfied:
  \begin{enumerate}[(1)]
    \item the bottom row starts with $\alpha$, that is,
    $\lambda(1,1)=\alpha$;
    \item the top row ends with $\omega$, that is,
    $\lambda(n,m)=\omega$;
    \item all vertical constraints are satisfied, that is, for every
      $i\le n$ and every $j< m$, $(\lambda(i,j),\lambda(i,j+1))\in V$; and,
    \item all horizontal constraints are satisfied,  that is, for every
      $i< n$ and every $j\le m$, $(\lambda(i,j),\lambda(i+1,j))\in H$.
  \end{enumerate}
 
Deciding, whether an instance $I$ for $\ECT$ has a valid tiling is
\EXPSPACE-complete (see, e.g., \cite{Boas1997}). 

We first describe how  tilings for $I$ can be encoded as
structures with a linear order $\le_1$, a total preorder $\precsim_2$ and some
unary relations. Then we describe how an $\FO^2(\le_1,
\precsim_2)$-sentence $\varphi_I$ can be constructed in polynomial time from $I$ that
describes necessary conditions for structures that encode a valid tiling.
Finally, we show that from a model of $\varphi_I$ a
valid tiling for $I$ can be constructed, thus establishing that
$I\mapsto\varphi_I$ is a reduction from $\ECT$ to finite satisfiability
of $\FO^2(\le_1, \precsim_2)$.
 
Let $I=(T,H,V,\alpha,\omega,n)$ be an \ECT-instance and let
$\lambda:[1,n]\times[1,m]\to T$ be a valid tiling for $I$. For
simplicity we assume that $n=2^k$, for some integer $k$. We define a
structure $M(\lambda)$ as follows. The universe $P$ of $M(\lambda)$
consists of points from $\nat\times\nat$. Each position $(i,j)$ of the
tiling is represented in  $M(\lambda)$ by the two elements
\begin{iteMize}{$\bullet$}
\item $p_-(i,j)=(2jn+i,j)$ and
\item $p_+(i,j)=((2j+3)n+i,j)$. 
\end{iteMize}

It should be noted that $x\not=x'$ for any two elements $(x,y)\not=(x',y')$ from $P$.
Therefore, we can define a linear order $\le_1$ on
$P$ by
\[
(x,y)\le_1(x',y') \;\myadef\; x\le x'.
\]
We define the total preorder $\precsim_2$ by
\[
(x,y)\precsim_2(x',y') \;\myadef\; y\le y'.
\]

Thus, the equivalence classes of $\precsim_2$ are just maximal sets of points of $P$ on the
  same horizontal line. 

Furthermore, $M(\lambda)$ has the set $\calC\cup \calS \cup
\{R_-,R_+\}$ of unary relations, where
\begin{iteMize}{$\bullet$}
\item the relations from $\calC=\{C_1,\ldots,C_k\}$ encode the column
  number of elements via  $C_p=\{(i,j)\mid \text{the $p$-th bit of the
  binary representation of $i$ is 1}\}$; 
\item the relations from $\calS=\{S_t\mid t\in T\}$ encode the actual
  tiling via $S_t=\{p_-(i,j), p_+(i,j)\mid \lambda(i,j)=t\}$; and
\item $R_-$ contains all elements of the form $p_-(i,j)$ and $R_+$ all
  elements of the form $p_+(i,j)$. We call the former
  \textit{negative} and the latter \textit{positive} elements. 
\end{iteMize}

See Figure \ref{theo:hardness:fig2} for an outline of the construction.
  \newcommand{\sline}[4]
  {
    \pnode(#2,#1){f1}
    \pnode(#3,#1){f2}
    \ncline{*-*}{f1}{f2}
    \nbput[labelsep=1mm]{#4}
  }

  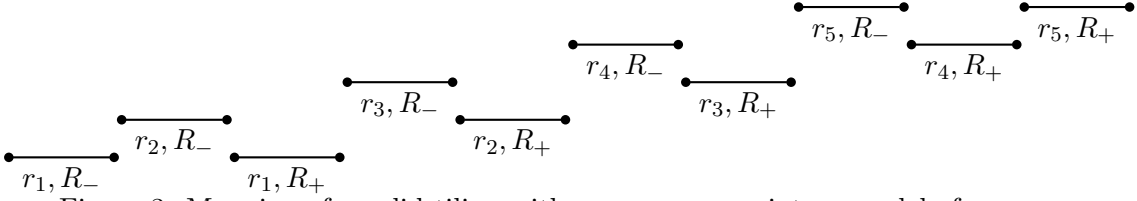
\begin{figure}[t] 
      \psset{xunit=0.5}
      \psset{yunit=0.5}
\centering
    \begin{pspicture}(28,5)
      
      \sline{0}{0.2}{3}{$r_1, R_-$}
      \sline{1}{3.2}{6}{$r_2, R_-$}
      \sline{2}{9.2}{12}{$r_3, R_-$}
      \sline{3}{15.2}{18}{$r_4, R_-$}
      \sline{4}{21.2}{24}{$r_5, R_-$}
      
      \sline{0}{6.2}{9}{$r_1, R_+$}
      \sline{1}{12.2}{15}{$r_2, R_+$}
      \sline{2}{18.2}{21}{$r_3, R_+$}
      \sline{3}{24.2}{27}{$r_4, R_+$}	
      \sline{4}{27.2}{30}{$r_5, R_+$}	
      
    \end{pspicture} 
    \caption{Mapping of a valid tiling with rows $r_1, \ldots, r_5$ into a model of $\varphi$.\label{theo:hardness:fig2}}
  \end{figure}

Next, we describe the construction of $\varphi_I$. As we keep the
intuition of the previous section that elements of a structure with a
linear order $\le_1$ and a total preorder $\precsim_2$ can be
considered as points in the plane, we will refer to the elements of
such a structure as \textit{points} and to the equivalence classes
with respect to $\sim_2$ as \textit{horizontal lines} or simply \textit{lines}\footnote{We
  avoid the term \textit{row} for horizontal lines to avoid confusion
  with the rows of a tiling.}.

Let
$\phit{general}$ be a formula that expresses that
\begin{iteMize}{$\bullet$}
\item every point $p$ is in exactly one of $R_+$ and $R_-$, and
\item for every point $p$ there is exactly one $t$ such that $p\in
  S_t$. In the following, we refer to this $t$ by $t(p)$.
\end{iteMize}

In the following, we associate with every point $p$ a column number
$c(p)$ that is the number whose bit string has a 1 at position $i$ if
and only if $p\in C_i$ (where we count the least significant bit as
position 1). 

We use a formula $\psi_{+1}(x,y)$ to express that the column number of
$y$ is the column number of $x$ plus one:
\begin{equation*}
  \begin{split}
    \psi_{+1}(x,y) = & \bigvee_{i=1}^k \bigl(\neg C_i(x) \land C_i(y) \land\\
& \bigwedge_{j=1}^{i-1} C_j(x) \land \neg C_j(y) \land\\
 & \bigwedge_{j=i+1}^{k} C_j(x) \leftrightarrow C_j(y)\bigr)\\
  \end{split}
\end{equation*}

The next formula $\phit{line}$ expresses that the points of every
horizontal line are as intended. 
More precisely,
$\phit{line}$ expresses that  for every point $p$
\begin{iteMize}{$\bullet$}
\item there is a point $q\in R_-$ with column number $0$ in
  the same line such that $q\le_1 p$;
\item there is a point $q\in R_+$ with column number 0 in
  the same line such that 
($q \le_1 p \Llr p\in R_+$);
\item  there is a point $q$ in the same line with
  $c(q)=c(p)$, $t(q)=t(p)$ and ($p\in R_- \Llr q\in R_+$);
\item  unless $c(p)=n-1$, there is a point $q$
  in the same line with $c(p)+1=c(q)$, $p<_1 q$, ($p\in R_- \Llr q\in R_-$) and
  $(t(p),t(q))\in H$; and
\item there is no point $q\not=p$ in the same line with the same
  column number and ($p\in R_- \Llr q\in R_-$). 
\end{iteMize}

The last formula $\phit{next-line}$ expresses that for every line $l$ which
is not a top line, there is another line $l'$ such that the $R_-$-part
of $l'$ is to the northwest of the $R_+$-part of $l$ and such that the
points in $l$ and $l'$ are compatible with respect to $V$. That is,  $\phit{next-line}$ expresses that
\begin{iteMize}{$\bullet$}
\item  for every point $p\in R_+$ with $c(p)=0$ there is a point $q$
  such that
  \begin{iteMize}{$\bullet$}
  \item $q$ is in the same line as $p$, $c(q)=n-1$ and $t(q)=\omega$,
    or
  \item $q<_1p$, $p\prec_2 q$ and $c(q)=n-1$; and
  \end{iteMize}
\item  for every point $p\in R_+$ and every point $q\in R_-$ with
  $q<_1p$, $p\prec_2 q$ and $c(q)=c(p)$ it holds $(t(p),t(q))\in V$.
\end{iteMize}

The formula $\phit{start}$ expresses that the smallest point with respect to $\leq_1$ has the tile $\alpha$.

Finally, 
\[
\varphi\mydef \phit{general} \land \phit{start} \land \phit{line}
\land \phit{next-line}. 
\]

It is easy to check that $M(\lambda)\models\varphi$ if $\lambda$ is a
valid tiling. 

It remains to show that from every finite model $M$ of $\varphi$ we can construct
a valid tiling $\lambda_M$ for $I$. Let thus $M$ be a finite model of
$\varphi$. The fact that $M\models\phit{general}$ guarantees that
every point represents exactly one tile and is in either $R_-$ or
$R_+$. From $M\models\phit{line}$ we can conclude that
\begin{iteMize}{$\bullet$}
\item 
every class of $\sim_2$ consists of exactly $2n$ points, one for each
possible combination of column number and being in $R_-$ or $R_+$;
\item points with adjacent column numbers have tiles that are
  compatible with respect to $H$; and
\item every $R_-$-point has the same tile as the $R_+$-point with the
  same column number.
\end{iteMize}
Thus, in particular, every class of $\sim_2$ represents a horizontally valid
row. 

From  $M\models\phit{start}$ it follows that there is a line that
represents a valid bottom row of a tiling. We finally show that, for
every line $l$ of $M$ it holds that $l$ represents a valid top row of
a tiling or there is another line $l'$ above $l$ that represents a
tiling row that is consistent with the tiling row of $l$ with respect
to $V$. 

It should be noted that we can not guarantee by an $\FO^2$-formula that
for every point $p$ in $R_+$ and column number 0 there is \textit{exactly} one
(full or partial) line in northwestern direction. However, the first
part of  $\phit{next-line}$ guarantees that there is \textit{at least} one line $l'$
such that all $R_-$-points of $l'$ are in northwestern direction of
$p$ (and therefore in northwestern direction of all $R_+$-points in
the line of $p$). Thus, the second part of $\phit{next-line}$
guarantees that the tiling row represented by $l'$ is compatible with
the row represented by $p$'s line. 

Therefore, $M$ has a line encoding a valid bottom row of a tiling and
every line either encodes a valid top row or has a line above that
encodes a compatible row with respect to $V$. As $M$ is finite, we can
conclude that there exists a valid tiling for $I$. The extraction
of a valid tiling from a model of $\varphi$ is illustrated in Figure
\ref{theo:hardness:fig}.

  \newcommand{\dotline}[4]
  {
    \pnode(#2,#1){f1}
    \pnode(#3,#1){f2}
    \ncline[linestyle=dotted]{*-*}{f1}{f2}
    \nbput{#4}
  }

  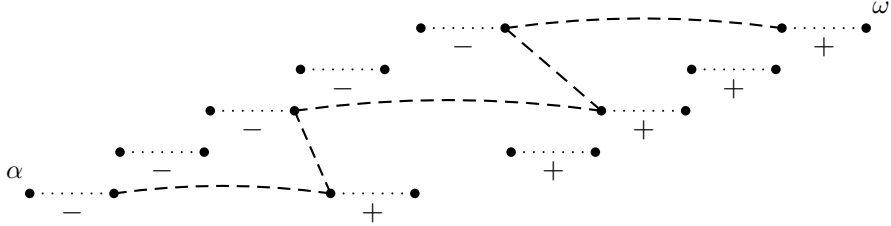
\begin{figure}[t] 
      \psset{xunit=0.4cm,yunit=0.55cm,labelsep=1mm}

\centering
    \begin{pspicture}(28,5)
      \rput(-0.3,0.5){\small $\alpha$}
      \rput(28.5,4.5){\small $\omega$}
      \dotline{0}{0.2}{3}{$-$}
      \dotline{1}{3.2}{6}{$-$}
      \dotline{2}{6.2}{9}{$-$}
      \dotline{3}{9.2}{12}{$-$}
      \dotline{4}{13.2}{16}{$-$}

      \dotline{0}{10.2}{13}{$+$}
      \dotline{1}{16.2}{19}{$+$}
      \dotline{2}{19.2}{22}{$+$}
      \dotline{3}{22.2}{25}{$+$}
      \dotline{4}{25.2}{28}{$+$}

      \pcarc[linestyle=dashed]{-}(3,0)(10.2, 0)
      \psline[linestyle=dashed]{-}(10.2, 0)(9,2)
      \pcarc[linestyle=dashed]{-}(9,2)(19.2,2)
      \psline[linestyle=dashed]{-}(19.2,2)(16, 4)
      \pcarc[linestyle=dashed]{-}(16,4)(25.2,4)
    \end{pspicture} 
    \caption{A sketch of the extraction of the rows of a valid tiling
      from a  model of formula $\varphi$. The dotted lines indicate
      rows represented by (half) a line. The dashed lines indicate
      the extracted tiling. \label{theo:hardness:fig}}
  \end{figure}
\end{myproof}


\subsection{Undecidable Extensions}

Next, we show that the approach of Theorem \ref{theo:twodcompl} fails
if the linear order is replaced by a (second) total preorder. It can
be concluded that Proposition \ref{prop:lpp} does not hold any more if
we allow vertical directions in $\exists$-constraints of $\LPP$. 

\begin{theorem}\label{theo:twopreorders}
  Finite satisfiability for $\FO^2(\precsim_1, \precsim_2)$ is undecidable.
\end{theorem}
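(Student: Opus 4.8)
The plan is to reduce from the Post Correspondence Problem (\PCP), whose positive instances possess \emph{finite} solutions and which is therefore well-suited to a finite-satisfiability reduction. Fix pairs $(u_1,v_1),\ldots,(u_k,v_k)$ over an alphabet $\Sigma$; a solution is a nonempty index sequence $i_1,\ldots,i_n$ with $w:=u_{i_1}\cdots u_{i_n}=v_{i_1}\cdots v_{i_n}$. I would construct, by a computable (indeed polynomial-time) map, an $\FO^2(\precsim_1,\precsim_2)$-sentence $\varphi_{\mathcal I}$ that is finitely satisfiable if and only if the instance has a solution.

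I would use the point-set reading of two preorders from Section~\ref{sec:preliminaries}: a model is a multiset of labeled points in the plane, with $\sim_1$-classes as columns (ordered by $\prec_1$) and $\sim_2$-classes as rows (ordered by $\prec_2$). The decisive extra power over the decidable single-preorder case is that \emph{both} coordinates may now carry non-trivial equivalence classes, so one can force a genuine two-dimensional grid rather than a sequence of columns each holding at most one point. The core idea is to represent $w=a_1\cdots a_N$ by one point per letter, ordered left to right by $\prec_1$, each point carrying --- through finitely many unary relations --- its symbol $a_j$ together with the data of the top factorization (which factor $u_{i_\ell}$ it lies in and its offset there) and of the bottom factorization (factor $v_{i_{\ell'}}$ and offset). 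Because a single point records both readings, the equality $U=V$ is automatic; what remains is to enforce that both factorizations are legal and that they use the \emph{same} index sequence.

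Here the two preorders perform two different alignment jobs. The order $\prec_1$ carries the letter-by-letter reading and, through $\forall\forall$ forbidden patterns and $\forall\exists$ witness demands, checks that consecutive letters within a factor spell the prescribed $u_i$ (resp.\ $v_i$) and that factor boundaries are placed correctly. The order $\prec_2$ is reserved for the \emph{block} alignment: I would place the start-marker of the $\ell$-th top factor and the start-marker of the $\ell$-th bottom factor into the same $\sim_2$-class, so that an $\FO^2$ formula can demand that two block-starts sharing a row carry the same index, forcing the sequences $i_1,\ldots,i_n$ of the two readings to coincide. Boundary conjuncts fix the first and last factors. The ``if'' direction is routine, since the canonical point set built from a solution satisfies $\varphi_{\mathcal I}$ by construction.

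The main obstacle is that $\FO^2(\precsim_1,\precsim_2)$ has \emph{no} successor relation, yet the encoding needs strictly local adjacency --- ``the next letter'', ``the next block'' --- whereas the successor of a preorder is not $\FO^2$-definable (ruling out an intermediate element requires a third variable). I would overcome this by baking successor handles into the intended model: bounded cyclic ``phase'' colorings of the columns and of the blocks, together with start/end flags, so that ``the relevant next point'' is pinned down by a direction atom combined with a phase change and a uniqueness condition, all expressible with two variables. The heart of the correctness argument is then the ``only if'' direction: one must show that \emph{every} finite model of $\varphi_{\mathcal I}$ --- including models cluttered with spurious points --- embeds a genuine solution, so the $\forall\forall$ constraints must be engineered carefully enough that no degenerate or ``short-circuiting'' configuration can satisfy $\varphi_{\mathcal I}$ in the absence of an actual solution. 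Reconciling the two simultaneous alignments (letters via $\prec_1$, blocks via $\prec_2$) under these no-successor, two-variable limitations is the crux on which the whole reduction turns.
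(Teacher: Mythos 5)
Your choice of reduction source (\PCP) and the idea that $\precsim_2$ should align the blocks of the two factorizations both match the paper's strategy, but your encoding cannot be made to work, and the obstruction is the paper's own main theorem. In your intended models the universe consists of one point per letter of $w$, ``ordered left to right by $\prec_1$'', so every $\sim_1$-class is a singleton and $\precsim_1$ is a linear order. Antisymmetry of $\precsim_1$ is expressible in $\FO^2$; conjoin it to your $\varphi_{\mathcal{I}}$ to obtain $\varphi'_{\mathcal{I}}$. If your reduction were correct, $\varphi'_{\mathcal{I}}$ would still be finitely satisfiable exactly when the \PCP instance has a solution: the canonical models you build from solutions have linear $\precsim_1$ and hence satisfy $\varphi'_{\mathcal{I}}$, and every model of $\varphi'_{\mathcal{I}}$ is a model of $\varphi_{\mathcal{I}}$ and so, by your claimed correctness, decodes to a solution. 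But $\varphi'_{\mathcal{I}}$ lives in the fragment $\FO^2(\leq_1,\precsim_2)$, whose finite satisfiability is decidable in \EXPSPACE by Theorem~\ref{theo:twodcompl} --- so \PCP would be decidable, a contradiction. Note that you yourself state the right principle (``the decisive extra power \ldots is that \emph{both} coordinates may now carry non-trivial equivalence classes''), but your one-point-per-letter encoding violates it: your columns are singletons. Consequently, the step you flag as the crux --- simulating ``the next letter''/``the next block'' along $\prec_1$ by cyclic phase colorings, flags and uniqueness conditions --- is not an engineering difficulty but a provable impossibility; it is precisely the gap between the undecidable setting of Proposition~\ref{prop:succundec}, where a genuine successor on positions is available, and the decidable one, where it is not.

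The paper's proof escapes this trap by giving \emph{both} preorders non-trivial classes and by eliminating every adjacency requirement. The universe has \emph{two} points per letter position, a $u$-copy and a $v$-copy. The $\sim_1$-classes are the pairs $\{u[i],v[i]\}$, so the letter equality $u[i]=v[i]$ becomes a within-class condition (each class contains exactly one $U$-element and one $V$-element, and they agree on $A_0,A_1$). The $\sim_2$-classes are whole blocks, containing all positions of $u_{i_j}$ \emph{and} of $v_{i_j}$ for the $j$-th tile, so ``both factorizations use the same index sequence'' becomes homogeneity of a unary predicate $W_j$ on each class. Inside a block only boundedly many positions occur, so they are addressed by unary predicates $P_1,\ldots,P_\ell$ (bounded by the longest word in the instance) rather than by a successor relation, and order-consistency axioms relating $\prec_1$, $\prec_2$ and the $P$-indices force the concatenation structure of $u$ and $v$. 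Every local constraint of the usual \PCP encoding is thus replaced by equivalence-class membership plus bounded unary addressing --- exactly the resource your encoding renounces by making $\precsim_1$ linear.
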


\begin{myproof}
Let $\Sigma=\{0,1\}$.  We give a reduction from the Post
Correspondence Problem \PCP to finite satisfiability for
$\FO^2(\precsim_1,\precsim_2)$, where \PCP is defined as follows

  \problemdescr{\PCP}{A sequence $(u_1, v_1), \ldots, (u_k, v_k)$,
    where every $u_i,v_i \in \{0,1\}^*$.}{Is there a non-empty, finite
    sequence $\vec{i}=i_1, \ldots, i_m$ such that\newline 
$u_{i_1}    \ldots u_{i_m} = v_{i_1} \ldots v_{i_m}$?} 

  Let $I=(u_1, v_1), \ldots, (u_k, v_k)$ be an instance of the PCP. 

We first describe how solutions for $I$ can be encoded as
structures with two total preorders $\precsim_1, \precsim_2$ and some
unary relations. Then we describe how an $\FO^2(\precsim_1,
\precsim_2)$-sentence $\varphi_I$ can be constructed from $I$ that
describes properties that a model that encodes a solution to $I$
should fulfill. Finally, we show that from a model of $\varphi_I$ a
solution to $I$ can be extracted, thus establishing that
$I\mapsto\varphi_I$ is a reduction from \PCP to finite satisfiability
of $\FO^2(\precsim_1, \precsim_2)$-sentences.

For the first step, let $\vec{i}=i_1, \ldots, i_m$ be a solution to $I$ with
\mbox{$u := u_{i_1} \ldots u_{i_m} = v_{i_1} \ldots v_{i_m} =: v$}. The
universe of the intended model consists of all positions of $u$ and
all positions of $v$ and therefore has $|u|+|v|=2|u|$ elements. We
refer to the $i$-th position of $u$ and $v$ by $u[i]$ and $v[i]$,
respectively.\footnote{It should be stressed that $u[i]$ does not
  denote a symbol but a position in a string.}

The equivalence classes of $\precsim_1$ are simply the
sets $S_i=\{u[i],v[i]\}$, for $i\le|u|$, and, for all $i,j\le |u|$, if $i< j$ then $S_i\precsim_1 S_j$. 

The total preorder $\precsim_2$ has one equivalence class $C_j$ for every
$j\le m$ containing all positions corresponding to $u_{i_j}$ in $u$
and $v_{i_j}$ in $v$. Again, if $i < j$ then $C_i\precsim_2 C_j$ for all $i,j\le m$.

  The construction of $\preceq_1$ and $\preceq_2$ is illustrated in Figure \ref{theo:undec:fig}.

  \setlength{\dashlinegap}{3pt}
  \setlength{\dashlinedash}{0.5pt}
  \newdimen\mywidth
  \mywidth=1cm

  \begin{figure}[t]
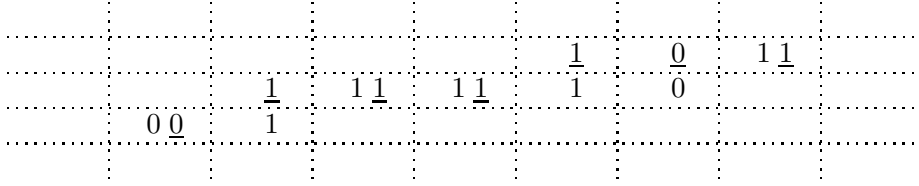
 \centering
      \begin{tabular}{p{\mywidth}: p{\mywidth}: p{\mywidth}: p{\mywidth}: p{\mywidth}: p{\mywidth}: p{\mywidth}: p{\mywidth}: p{\mywidth}}
	 $ $& &  &  &  &  &  & &$ $\\
	\hdashline
	 & &  &  &  & $\;\;\;\;\;\underline{1}$ & $\;\;\;\;\;\underline{0}$ & $\;\;\;1\;\underline{1}$& \\
	\hdashline
	 &  & $\;\;\;\;\;\underline{1}$ &$\;\;\;1\;\underline{1}$ &$\;\;\;1\;\underline{1}$ & $\;\;\;\;\;1$ & $\;\;\;\;\;0$ & & \\
	\hdashline
	 &$\;\;\;0\;\underline{0}$ & $\;\;\;\;\;1$ &  &  &  &  & & \\
	\hdashline
	 & &  &  &  &  &  & & \\
      \end{tabular}
    \caption{How the valid sequences $u := 01|1110|1$ and $v := 0|111|101$ are represented in a model for the $\FO^2(\precsim_1, \precsim_2)$-formula $\varphi$. Columns represent equivalence classes of $\precsim_1$ and rows represent equivalence classes of $\precsim_2$. Letters from $v$ are underlined. \label{theo:undec:fig}}
  \end{figure}

Furthermore, the intended model uses unary relation symbols from
$\{A_0,A_1\} \cup W \cup P \cup \{U, V\}$, where
\begin{iteMize}{$\bullet$}
\item $W=\{W_1,\ldots,W_k\}$, and
\item $P=\{P_1, \ldots, P_\ell\}$, where $\ell$ is the length of the
  longest word in $\{u_1, \ldots, u_k, v_1, \ldots, v_k\}$. 
\end{iteMize}
The relation $U$ consists of all positions of $u$ and $V$ consists of
all positions of $v$. Every position with symbol $0$ is in
$A_0$ and every position with symbol $1$ is in
$A_1$. Furthermore, if in $u = u_{i_1} \ldots u_{i_m}$, position $i$ of $u$ corresponds to
the $p$-th position of $u_{i_k}$ and $i_k=j$ then $u[i]$ is put into the
relations $W_j$ and $P_p$. Likewise, if in $v = v_{i_1} \ldots
v_{i_m}$, position $i$ of $v$ corresponds to the $p$-th position of
$v_{i_k}$ and $i_k=j$ then $v[i]$ is put into the relations $W_j$ and $P_p$.

This completes the description of the construction of the intended
model $M(\vec{i})$. 

Next, we give some necessary conditions that $M(\vec{i})$ has to
fulfill. 

   It is easy to construct an $\FO^2$-formula
   $\varphi_\text{symbols}$ expressing that
 \begin{iteMize}{$\bullet$}
    \item in every equivalence class of $\precsim_1$, there is exactly
      one element labeled from $U$ and one element from $V$, and
    \item the elements of every equivalence class of $\precsim_1$
      coincide on the predicates $A_0$ and $A_1$.
  \end{iteMize}
It should be noted here that in general it can not be expressed by
any $\FO^2$ formula that all classes of a given equivalence relation
have size two. However, in our setting it is sufficient to express
that no two different  elements from $U$ are $\sim_1$-equivalent and
likewise for $V$. 

  It is also not hard to construct an $\FO^2$-formula $\varphi_\text{words}$
  expressing that for every equivalence class $E$ of $\precsim_2$, there
  is some $j$ such that
  \begin{iteMize}{$\bullet$}
  \item all elements of $E$ are in $W_j$,
  \item for each $p\le |u_j|$ there is exactly one element $e$ in $E \cap P_p\cap U$ and $e$ is in $A_\sigma$ if and only if  $u_j[p]$ is labeled with $\sigma$, and  
  \item for each $p\le |v_j|$ there is exactly one element $e$ in $E \cap P_p\cap V$ and $e$ is in $A_\sigma$ if and only if  $u_j[p]$ is labeled with $\sigma$.
  \end{iteMize}

  Finally, an $\FO^2$-formula $\phit{order}$ can be
  constructed expressing that the two total preorders are consistent with each
  other, that is, 
  \begin{iteMize}{$\bullet$}
    \item for all elements $e_1, e_2 \in U$, if $e_1 \prec_2 e_2$, then
      $e_1 \prec_1 e_2$, and 
    \item for all elements $e_1, e_2 \in U$, if $e_1 \prec_1 e_2$ then
	\begin{iteMize}{$-$}
	  \item either $e_1 \prec_2 e_2$ 
	  \item or $e_1 \sim_2 e_2$ and $e_1\in P_k$, $e_2\in
            P_\ell$, for some $k<\ell$.
	\end{iteMize}
  \end{iteMize}
 Likewise for $v$-positions.

We let $\varphi_I =  \varphi_\text{symbols} \wedge
\varphi_\text{words} \wedge \varphi_\text{order}$. Clearly, if $\vec{i}$
is a solution for $I$, then $M(\vec{i})\models \varphi_I$. 

It remains to show that $I$ has a solution if there is a structure $M$
with $M\models\varphi_I$. To this end, let $M$ be a model for
$\varphi_I$. As $M\models \varphi_\text{words}$, we can associate with
every equivalence class $E$ of $\precsim_2$ an index $i(E)$ such that its
positions are in $W_{i(E)}$. Thus, the linear order $E_1\prec_2\cdots
\prec_2 E_m$ induces a sequence $\vec{i}=i_1,\ldots,i_m$, via
$i_j:=i(E_j)$. We show now that $\vec{i}$ is a solution to $I$. 

Let $a\not=b$ two elements from $U$. Formula
$\varphi_\text{symbols}$ ensures that they are not in the same
$\sim_1$-class. Thus, $\prec_1$ induces a linear order on the
$U$-positions and therefore these positions naturally constitute a string
$u$. Likewise,  the $V$-positions constitute a string
$v$. Formula $\varphi_\text{order}$
ensures that $a\prec_1 b$ if $a\prec_2 b$ or if $a\sim_2 b$ and there
are $k<l$ with $a\in P_k$ and $b\in P_l$. Thus,
$u=u_{i_1}\cdots u_{i_m}$ and, likewise, $v=v_{i_1}\cdots
v_{i_m}$. Finally, $ \varphi_\text{symbols}$ guarantees that $|u|=|v|$
and, for every $i$, $u[i]$ and $v[i]$ carry the same symbol. This completes the proof of the
correctness of the reduction. Clearly, $I\mapsto \varphi_I$ can be
computed (even in polynomial time).
\end{myproof}


Next we prove undecidability in case one \tpo is replaced by two
linear orders in the previous theorem.

\begin{theorem}\label{theo:twostrictonelinearorder}
  Finite satisfiability for $\FO^2(\leq_1, \leq_2, \precsim_3)$ is undecidable.
\end{theorem}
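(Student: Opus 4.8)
The plan is to reduce from \PCP, reusing as much as possible of the construction in the proof of Theorem~\ref{theo:twopreorders}. Recall that there a solution $\vec{i}=i_1,\dots,i_m$ with $u=u_{i_1}\cdots u_{i_m}=v_{i_1}\cdots v_{i_m}=v$ was encoded by a structure on the $|u|+|v|$ positions of $u$ and $v$, using two total preorders: the \emph{pairing} preorder $\precsim_1$ with the size-two classes $S_i=\{u[i],v[i]\}$, and the \emph{block} preorder $\precsim_2$ whose classes $C_j$ group the positions belonging to the $j$-th tile. My intention is to keep the block preorder as the genuine total preorder, i.e.\ to let $\precsim_3$ play the role of $\precsim_2$ (so that $\phit{words}$ carries over essentially verbatim), and to \emph{replace} the pairing preorder $\precsim_1$ by the two linear orders $\leq_1,\leq_2$. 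I keep the flexible-sized block classes on the genuine preorder because the pairing classes have size exactly two, so that ``lie in a common pairing class'' is a bipartite matching between $U$- and $V$-positions rather than an arbitrary equivalence, and a matching seems closer to two-variable reach than transitivity.

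The mechanism I would use to simulate $\precsim_1$ is order-disagreement. In the intended model I linearize the positions by $\leq_1$ as $u[1]<_1 v[1]<_1 u[2]<_1 v[2]<_1\cdots$ and let $\leq_2$ reverse $\leq_1$ inside each pair, i.e.\ $v[1]<_2 u[1]<_2 v[2]<_2 u[2]<_2\cdots$. Then $u[i]$ and $v[i]$ are exactly the pairs on which the two orders disagree, so the quantifier-free, two-variable formula $x\sim_1 y := x=y \lor (x<_1 y\land y<_2 x)\lor(y<_1 x\land x<_2 y)$ defines the pairing and $x\prec_1 y := x<_1 y\land x<_2 y$ the induced position order. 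With these abbreviations substituted for $\precsim_1$, I would take $\varphi_I=\phit{symbols}\land\phit{words}\land\phit{order}$ as in Theorem~\ref{theo:twopreorders}, adding two-variable constraints stating that $\leq_1$ and $\leq_2$ agree on $U\times U$ and on $V\times V$ (so that all inversions run between $U$ and $V$), that every $U$-position has a disagreeing $V$-partner and vice versa, and that disagreeing positions carry the same symbol. The forward direction is then routine: the intended model satisfies $\varphi_I$ because there the disagreement relation is precisely the intended matching.

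The hard part will be the backward direction, and it is exactly the point where the two-variable restriction bites: from an arbitrary model of $\varphi_I$ I must extract a genuine \PCP solution, which requires the disagreement relation to be an order-\emph{preserving bijection} between the $U$- and $V$-positions, so that $u=v$ can be read off. Neither the uniqueness of a partner nor the absence of ``crossing'' matched pairs is directly expressible with two variables, and the existence constraints alone do not even force $|U|=|V|$. My plan to overcome this is to prove a separate lemma that the full constraint system does force an order-isomorphism: using that $\leq_1$ and $\leq_2$ coincide on $U$ and on $V$, the inversions between $U$ and $V$ form a set that I would analyze by induction along $\leq_1$, showing it must be a non-crossing perfect matching once one also imposes, through $\precsim_3$ and the block/position markers $W_j,P_p$, that matched positions lie in corresponding blocks and agree on all labels. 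Establishing this combinatorial lemma, rather than writing down the formulas, is where I expect essentially all of the difficulty to lie; the remaining bookkeeping and the analogous treatment of the variant with the roles of $\precsim_1$ and $\precsim_2$ interchanged should then be straightforward.
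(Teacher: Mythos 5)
Your setup coincides with the paper's: keep the block preorder as $\precsim_3$, interleave the two linear orders in the intended model, and encode the pairing as the order-disagreement relation (the paper's $\phit{bi}(x,y)$, i.e.\ $x\le_1 y \wedge y\le_2 x$). The gap is in the backward direction, and it is not merely that your ``combinatorial lemma'' is left unproved --- as stated, it is false. Without modifying the \PCP instance, your constraint system does not force the disagreement relation to be a bijection. Concretely, take the unsolvable instance consisting of the single pair $(u_1,v_1)=(0,00)$ and the three-element structure with positions $u[1],v[1],v[2]$, all labeled $0$, all in one block, ordered by $u[1] <_1 v[1] <_1 v[2]$ and $v[1] <_2 v[2] <_2 u[1]$. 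The two orders agree on $U$ and on $V$, every position has a disagreeing partner of the same symbol ($u[1]$ is bi-related to \emph{both} $v[1]$ and $v[2]$), and all block/position-marker constraints of $\phit{words}$ and the order-consistency constraints hold; yet the instance has no solution. The obstruction is structural: whenever $a_1<a_2$ in $U$ are both bi-related to $b$, every $U$-position between them is bi-related to $b$ as well (an interval property of the disagreement relation, given agreement on $U$ and on $V$), and if all those positions carry the same symbol, no constraint is violated. Your fallback --- forcing matched positions to lie in corresponding blocks and agree on the markers $W_j,P_p$ --- would restore uniqueness but destroys the forward direction: in the intended model of a \PCP solution, $u[i]$ and $v[i]$ in general lie in \emph{different} blocks, because the decompositions of $u$ and $v$ into tiles are misaligned; that misalignment is the entire difficulty of \PCP.

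The paper closes exactly this gap with one extra idea missing from your proposal: before reducing, it recodes the instance via the homomorphism $h(0)=00'$, $h(1)=11'$ over $\hat{\Sigma}=\{0,1,0',1'\}$, so that marked and unmarked letters strictly alternate in every word. Then if $a_1<a_2$ in $U$ are both bi-related to $b$ (all three marked, say), alternation supplies an unmarked $U$-position $a$ with $a_1<a<a_2$; by the interval property, $\phit{bi}(a,b)$ holds, contradicting the requirement that bi-related positions carry the same symbol of $\hat{\Sigma}$. This kills the countermodel above, gives injectivity and hence (with the existence constraints) bijectivity, after which your non-crossing argument goes through as in the paper. So: right reduction, right simulation of the pairing by two linear orders, but the marking homomorphism is the essential missing ingredient, and without it no induction along $\le_1$ can establish the lemma you need.
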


\begin{myproof}
  As in the previous proof we reduce from \PCP. However, given a
  \PCP-instance $I$ we first translate it into a modified instance
  $I'$ over the alphabet $\hat{\Sigma}=\{0,1,0',1'\}$. The idea is to replace
  every symbol $\sigma$ in $I$ by $\sigma\sigma'$ in $I'$, e.g., to
  transform the instance $(0,01),(10,1)$ into
  $(00',00'11'),(11'00',00')$. More formally, $I'=h(I)$ where $h$ is the
  homomorphism defined by $h(0)=00'$ and $h(1)=11'$.

We call letters of the form $\sigma'$
  \textit{marked}. Clearly, $I'$ has a solution if and
  only if $I$ has a solution.   

From $I'$ we will construct a   $\FO^2(\leq_1, \leq_2,
\precsim_3)$-formula $\varphi_I$ with the intention that $\varphi_I$
has a finite model if and only if $I$ has a solution. The construction
of $\varphi_I$ is similar as in the proof of Theorem
\ref{theo:twopreorders}. However, the role of $\precsim_1$ in that
proof will be mimicked by $\leq_1$ and $\leq_2$, here. 

  As before, we first describe how solutions for $I'$ can be encoded
  as structures with two linear orders $\leq_1, \leq_2$, a
  total preorder $\precsim_3$, and some unary relations. Then we
  define $\varphi_I$ and show that $I\mapsto\varphi_I$ is a reduction
  from \PCP to finite satisfiability of  $\FO^2(\leq_1, \leq_2, \precsim_3)$.

For the first step, let $\vec{i}=i_1, \ldots, i_m$ be a solution to $I$ with
$u := u_{i_1} \ldots u_{i_m} = v_{i_1} \ldots v_{i_m} =: v$. As before
the intended model $M(\vec{i})$ consists of all positions of $u$ and
all positions of $v$ and therefore has $2|u|$ elements.

The total preorder $\precsim_3$  is defined exactly as $\precsim_2$ in
the previous proof. The model has additional unary
relations $\{A_0,A_1,M\} \cup W \cup P \cup \{U, V\}$ where $U,V$ and
the relations from $W\cup P$ are defined as before. Positions with
symbol $\sigma\in\{0,1\}$ are in $A_\sigma$, positions with symbol
$\sigma'\in\{0',1'\}$ are in $A_\sigma$ and in $M$. 

The linear orders $<_1$ and $<_2$ are defined by 
$$u[1] <_1 v[1] <_1 u[2] <_1 v[2] <_1 \ldots <_1 u[|u|] <_1 v[|u|]$$
and 
$$v[1] <_2 u[1] <_2 v[2] <_2 u[2] <_2 \ldots <_2 v[|u|] <_2 u[|u|]$$
 
See \ref{theo:undec1:fig} for an illustration of the construction of $\le_1$ and $\le_2$.

  \setlength{\dashlinegap}{3pt}
  \setlength{\dashlinedash}{0.5pt}
  \newdimen\mywidth
  \mywidth=0.2cm

  \begin{figure}[t]
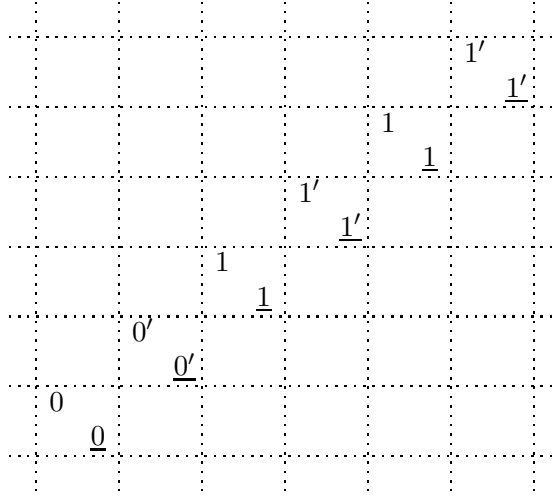
 \centering
      \begin{tabular}{p{\mywidth}: p{\mywidth} p{\mywidth}: p{\mywidth} p{\mywidth}: p{\mywidth} p{\mywidth}: p{\mywidth} p{\mywidth}: p{\mywidth} p{\mywidth}: p{\mywidth} p{\mywidth}: p{\mywidth}}
	& & & & & & & & & & & & & \\
	\hdashline
	& & & & & & & & & & & $1'$ & & \\
	& & & & & & & & & & & &  $\underline{1'}$& \\
	\hdashline
	& & & & & & & & & $1$ & & & & \\
	& & & & & & & & & & $\underline{1}$ & & & \\
	\hdashline
	& & & & & & & $1'$ & & & & & & \\
	& & & & & & & & $\underline{1'}$& & & & & \\
	\hdashline
	& & & & & $1$ & & & & & & & & \\
	& & & & & & $\underline{1}$ & & & & & & & \\
	\hdashline
	& & & $0'$ & & & & & & & & & & \\
	& & & & $\underline{0'}$ & & & & & & & & & \\
	\hdashline
	& $0$ & & & & & & & & & & & & \\
	& & $\underline{0}$ & & & & & & & & & & & \\
	\hdashline
	& & & & & & & & & & & & & \\

      \end{tabular}
    \caption{How the intended linear orders $\leq_1, \leq_2$ for a valid sequences $u := 00'|11'11'$ and $v := 00'11'|11'$ look like. Columns are ordered by $<_1$ and rows are ordered by $<_2$. Letters from $v$ are underlined. \label{theo:undec1:fig}}
  \end{figure}

  Now we describe necessary conditions that $M(\vec i)$ has to fulfill.

First of all, it should satisfy formula $\phit{words}$ of
the previous proof, referring to $\precsim_3$ in place of  $\precsim_2$.

Let $\varphi_{3}(x,y)$ be the formula which expresses that
\begin{iteMize}{$\bullet$}
\item either $x\prec_3 y$
\item or $x\sim_3 y$ and $x\in P_i$ and $y\in P_j$, for some $i<j$.
\end{iteMize}
It is not hard to see that in every model of $\phit{words}$
the formula $\varphi_3$ induces a linear order on the positions in $U$
and on the positions in $V$. Furthermore,  $\phit{words}$
guarantees that both orders begin with unmarked positions, alternate between marked and
unmarked positions and end with a marked position (just because the
strings $u_i, v_i$ do so). 
In the following, we write $x<_3 y$ as an abbreviation for $\varphi_{3}(x,y)$.

Using $\varphi_{3}$, we can define a formula
$\phit{2-order}$ expressing that $\le_1$, $\le_2$ and $<_3$
 are consistent on positions in $U$ and on positions in $V$. That
is, if $x,y\in U$ then $x<_1 y$ if and only if $x<_2 y$ if and only if
$x<_3 y$ and likewise for $x,y\in V$. In the following, we denote
these two orders on the positions of $U$ and $V$ simply by $\le$.

Finally, we construct a formula $\phit{symbols}$ with a
  similar intention as in the proof  of Theorem \ref{theo:twostrictonelinearorder}.

To this end, let $\phit{bi}(x,y)$ be a formula expressing that
$x\le_1 y$ and $y\le_2 x$.

The formula $\phit{symbols}$ expresses that the following
conditions hold.
\begin{iteMize}{$\bullet$}
\item $\phit{bi}(x,y)$ only holds for positions $x\not=y$ if
  \begin{iteMize}{$-$}
  \item $x\in U$ and $y\in V$, and
  \item both carry the same symbol from $\hat{\Sigma}$.
  \end{iteMize}
\item For every position $x$, there is a position $y$ such that
  $\phit{bi}(x,y)$ holds and vice versa.
\end{iteMize}

We let $\varphi_I =  \varphi_\text{symbols} \wedge
\varphi_\text{words} \wedge \varphi_\text{2-order}$ and it is again easy
to see that, if $\vec{i}$ is a solution for $I$, then $M(\vec{i})\models \varphi_I$.  

It remains to show that $I$ has a solution if there is a structure $M$
with $M\models\varphi_I$. To this end, let $M$ be a model for
$\varphi_I$. As $M\models \varphi_\text{words}$, we can associate with
every equivalence class $E$ of $\precsim_3$ an index $i(E)$ such that its
positions are in $W_{i(E)}$, just as in the previous proof. 
Again, the linear order $E_1\prec_3\cdots
\prec_3 E_m$ induces a sequence $\vec{i}=i_1,\ldots,i_m$, via
$i_j:=i(E_j)$. We show now that $\vec{i}$ is a solution to $I$.

For this purpose, it is sufficient to show that
\begin{enumerate}[(1)]
\item $\phit{bi}(x,y)$ defines
a bijection between the positions in $U$ and the positions in $V$, and
\item that this bijection
is compatible with $\le$, that is, if
$\phit{bi}(a_1,b_1)$ and $\phit{bi}(a_2,b_2)$ hold for
$a_1\not=a_2\in U$ and $b_1\not=b_2\in V$ then $a_1< a_2$ if and only if
$b_1<b_2$.  
\end{enumerate}

Indeed, from (1) and (2) it follows that the bijection induced by
$\phit{bi}$ pairs the $i$-th position of $U$ with the $i$-th
position of $V$, for every $i$. As $\phit{symbols}$
guarantees that corresponding positions carry the same symbol, it
follows that $\vec{i}$ is a solution. 

We next show (1), that is, there are no positions $a_1\not=a_2\in U$ and $b\in V$
such that $\phit{bi}(a_1,b)$ and $\phit{bi}(a_2,b)$
hold (and, correspondingly, for $a,b_1,b_2$). Towards a contradiction let
us assume the existence of such elements $a_1\not=a_2\in U$ and $b\in V$
for which $\phit{bi}(a_1,b)$ and $\phit{bi}(a_2,b)$
hold. Let, without loss of generality $a_1< a_2$ and let us assume
that both are marked.\footnote{We recall that
  $\phit{symbols}$ guarantees that $a_1$, $a_2$ and $b$
  are either all marked or all unmarked.} Then there is an unmarked
$U$-position $a$ with $a_1<a<a_2$. But then we can conclude that
$a\le_1 b$ and $b \le_2 a$, a contradiction as
$\phit{symbols}$ holds in $M$ and $a$ is unmarked and $b$ is
marked. 

To show (2) let us assume, again towards a contradiction, that there
are $a_1\not=a_2\in U$ and $b_1\not=b_2\in V$ such that
$\phit{bi}(a_1,b_1)$ and $\phit{bi}(a_2,b_2)$,
$a_1<a_2$ and $b_2<b_1$. But then
\begin{iteMize}{$\bullet$}
\item $a_1<_1 a_2 \le_1 b_2$ and
\item $b_2<_2 b_1 \le_2 a_1$ 
\end{iteMize}
and therefore $\phit{bi}(a_1,b_2)$ holds, contradicting
(1). Similarly, from $a_1>a_2$ and $b_2>b_1$ it follows that
$\phit{bi}(a_2,b_1)$ holds, again in contradiction to (1). 
\end{myproof}


\section{Conclusion}  \label{sec:conclusion}

The context of our results was already discussed in the introduction. Table \ref{tab:dwresults} summarizes the results of this paper
  and previous results for data
  words with different kinds of orders on positions and data values.

We 
mention some possible lines for extensions and further  research. We recall that $\FO^2(\leq_1,
  \precsim_2)$ is an abbreviation for $\FO^2(\FinOrd(\leq_1,
  \precsim_2))$.

  \subsubsection*{Two-Variable Logic.} The lower bound for finite
  satisfiability of $\FO^2(\leq_1; \precsim_2)$ from Theorem
  \ref{theo:expspacetwoorders} does not immediately carry over to
  $\FO^2(\Sigma,\leq_1, [\precsim_2,S_2])$ as the translation of a
  $\FO^2(\leq; \precsim)$ formula $\varphi$ into 
a $\DW^2(\Sigma,\leq, \precsim)$ formula might yield an alphabet of
exponential size in $|\varphi|$. Thus, there remains a gap between the
\EXPSPACE upper bound from Corollary \ref{cor:dw} and the $\NEXPTIME$ lower bound from \cite{BojanczykMuscholl+Two-Variable06}.
Further there is still a gap between the ``eight orders'' undecidability result of
    \cite{OttoTwo01} and the decidability for $\FO^2$ with two
    linear orders in this paper.

In the context of automated verification it would be interesting to
    generalize our results from data words to data $\omega$-words.

    \begin{table}[t]
{\small
      \centering
      \begin{tabular}{l*{3}{|p{3cm}}}
 Data $\backslash$ Positions& \centering Successor & \centering Linear
 Order & {\centering Successor \& \newline  Linear Order}\\
\hline
Successor
& \begin{center}\vspace{-3mm}{\centering ?}\end{center}
& \centering  in \EXPSPACE (this work) 
& undecidable \newline \cite{ManuelZ11}
 \\
Linear Order 
& \begin{center}\vspace{-3mm}{\centering ?}\end{center}
& \centering in \EXPSPACE (this work) 
& {\centering undecidable \cite{BojanczykMuscholl+Two-Variable06} }
\\
Successor \& LO
& \begin{center}\vspace{-3mm}{\centering ?}\end{center}
& \centering in \EXPSPACE (this work) 
& {\centering undecidable \cite{BojanczykMuscholl+Two-Variable06} }
\\
single-occ.\ Succ
&   \centering in 2\NEXPTIME \cite{Manuel10} 
& \centering in \EXPSPACE (this work) 
& {\centering decidable \newline \cite{ManuelZ11}}
\\
single-occ.\ LO 
& \centering in \EXPSPACE (this work) 
& \centering in \EXPSPACE (this work) 
& {\centering in \EXPSPACE  \newline (this work)} \\
     \end{tabular}
}
      \caption{Summary of results on Finite Satisfiability of $\FO^2$
        on data words with ordered data values. ``single-occ.'' refers
      to the case where each data values occurs at most once in a data
    word.}
        \label{tab:dwresults}
      
    \end{table}
  
 \subsubsection*{Other Logics.} There are connections between the
 results of this paper and some temporal logics, Compass Logic and
 Interval Temporal Logic. Some of these connections have been made
 precise in the conference paper underlying this article \cite{SchwentickZ10}.

  \textit{Compass Logic} is a two-dimensional temporal logic, whose
  operators allow for moving north, south, east and west along a grid
  \cite{Venema1990}. Satisfiability for compass logic is known to be
  undecidable \cite{MarxR1997}. Compass logic can be extended in two
  directions.  Up to now, only complete grids have been considered as
  underlying structures. Partial grids, i.e. grids where not all
  crossings need to exist, can be considered as underlying structures
  as well. Furthermore, operators northeast, northwest, southeast and
  southwest can be considered. The results of this article  can be
  used, after some appropriate modifications, to yield decidability for compass logic with these extensions when only the operators northwest, northeast, southwest, southeast, west and east are used.

  \textit{Interval Temporal Logic} can reason about intervals of time
  with the help of operators as `after', `during', `begins'
  etc. Expressions such as `Immediately after we finished writing the
  paper, we will go to the beach' can be captured. To this end,
  propositions, like ``writing the paper'' or ``go to the beach'', are
  assigned to time intervals. In conventional interval temporal logic,
  all possible intervals are considered as part of a structure, that
  is, reasoning is always with respect to all intervals. In a setting
  where structures may consist of a subset of the set of all
  intervals, decidability of reasoning with the operators `ends',
  `later' and `during' as well as their duals can be obtained from our
  results, again after appropriate modifications. 

  We believe that partial models for compass and interval logic deserve some further investigations.

Besides relations to compass logic and interval logic we conjecture that there are connections to spatial
reasoning that is done in the context of Geographical Information
Systems (for a survey, see \cite{CohnHazarikaQualitative01}). 


\bibliographystyle{amsalpha}
\bibliography{bibliography}

\end{document}